\algnewcommand\algorithmicforeach{\textbf{for each:}}
\algnewcommand\ForEach{\item[ \algorithmicforeach]}
    \newtheorem{theorem}{Theorem}
    \newtheorem*{proposition*}{Proposition}
    \newtheorem{lemma}{Lemma}
    \newtheorem{corollary}{Corollary}
    \newtheorem{observation}{Observation}
     \theoremstyle{definition} 
\def\x*{\mathbf{x^*}}
\def\x{\mathbf{x}}
\def\v{\mathbf{v}}
\def\px{\mathbf{\pi}}
\def\px*{\mathbf{\pi^*}}
\def\q{\mathbf{q}}
\def\w{\mathbf{w}}
\def\y{\mathbf{y}}
\def\z{\mathbf{z}}
\def\c{\mathbf{c}}
\def\0{\mathbf{0}}
\def\1{\mathbf{1}}
\title{The Strong Maximum Circulation Algorithm: A New Method for Aggregating Preference Rankings}
\author{Nathan Atkinson \\ University of Wisconsin \\ \tt{natkinson@wisc.edu}  \and Scott C. Ganz \\  Georgetown MSB and AEI \\ \tt{scott.ganz@georgetown.edu} \and Dorit S. Hochbaum \\ UC Berkeley \\ \tt{dhochbaum@berkeley.edu} \and James B. Orlin \\ MIT Sloan \\ \tt{jorlin@mit.edu}}
\begin{document}
\date{}
\maketitle

\begin{abstract}
We present a new optimization-based method for aggregating preferences in settings where each voter expresses preferences over pairs of alternatives. Our approach to identifying a consensus partial order is motivated by the observation that collections of votes that form a cycle can be treated as collective ties. Our approach then removes unions of cycles of votes, or circulations, from the vote graph and determines aggregate preferences from the remainder. Specifically, we study the removal of \emph{maximal circulations} attained by any union of cycles the removal of which leaves an acyclic graph. We introduce the \textit{strong maximum circulation}, the removal of which guarantees a unique outcome in terms of the induced partial order, called the \textit{strong partial order}. The strong maximum circulation also satisfies strong complementary slackness conditions, and is shown to be solved efficiently as a network flow problem. We further establish the relationship between the dual of the maximum circulation problem and Kemeny's method, a popular optimization-based approach for preference aggregation. We also show that identifying a {\em minimum maximal} circulation---i.e., a maximal circulation containing the smallest number of votes---is an NP-hard problem. Further an instance of the minimum maximal circulation may have multiple optimal solutions whose removal results in conflicting partial orders.\\

Keywords: combinatorial optimization, network flow algorithms, preference aggregation
\end{abstract}

\newpage

\section{Introduction}

The problem of aggregating preferences is of central interest to a broad set of problems in economics, political science, psychology, operations research, and computer science. Given a set of (possibly conflicting) pairwise comparisons of alternatives, how can we determine non-conflicting aggregate pairwise comparisons that best summarize the data? This question is fundamental to topics as diverse as the analysis of democratic voting systems, the design of online search and recommendation algorithms, and the construction of nonparametric statistical tests.

It is well-known that no preference aggregation method satisfies every potential desirable normative property---e.g., non-dictatorship, independence of irrelevant alternatives, and Pareto efficiency \citep{arrow_social_1951,pini2009aggregating}. Further, existing approaches that are based on maximizing the alignment of an aggregate ranking with a set of pairwise comparisons face issues related to computational complexity that are driven by the exponential size of the set of possible orderings of alternatives \citep{bartholdi_voting_1989, fischer2016weighted}. As a result, there has been significant interest in applying approaches from optimization and machine learning in order to discover new aggregation methods that are both well-justified normatively and computationally efficient \citep[see, e.g.,][]{agarwal_mathematics_2010}.

We introduce a new optimization-based approach for aggregating preferences that is based on the logic of removing cycles of votes, i.e., sets of votes for which every vote in favor of an alternative is counterbalanced by a vote against that alternative. Given a cyclic set of votes, there is no principled way of saying that any alternative is preferred to any other. This insight points to the possibility of treating cycles of votes as uninformative ties and using the non-conflicting remainder to determine an aggregate preference ranking.

We represent a set of votes that represent decision makers' pairwise preferences over alternatives by a {\em vote graph}. The nodes of the vote graph are the alternatives.  For each pair of alternatives $i$ and $j$, we let $q_{ij}$ denote the number of voters who prefer $i$ to $j$.  An arc $(i, j)$ is in the vote graph provided that $q_{ij}>0$. The collection of votes is said to be \emph{conflicting} if there is a directed cycle in the vote graph. A set of votes is \emph{non-conflicting} if there is no directed cycle in the vote graph. For non-conflicting sets of votes, there is an induced partial order in which alternative $i$ is preferred to alternative $j$ if and only if there is a directed path from $i$ to $j$ in the vote graph.  This partial order is the smallest one that is consistent with the votes.

For each pair $i, j$ of alternatives, we refer to the number of votes removed in which $i$ is preferred to $j$ as the flow in $(i, j)$. Our method removes votes that correspond to a flow circulation in the vote graph. That is, for each alternative $i$, the flow out of $i$ is equal to the flow into $i$. It is well known that any circulation can be decomposed into flows around cycles (see, for example, Chapter 3 of \citet{ahuja_network_1993}). 

A necessary condition for our preference aggregation approach to induce a consensus partial order is that, after the removal of a circulation, the remaining votes are non-conflicting; equivalently, the remaining vote graph is acyclic.  We refer to such circulations as \emph{maximal}.  The arc $(i, j)$ is in the remaining vote graph whenever the maximal circulation sends strictly less than $q_{ij}$ units of flow in $(i, j)$. In this case, we say that there is an \emph{aggregate preference} for $i$ over $j$ following the removal of the circulation.

There are many possible ways of removing maximal circulations from a vote graph, and thus there are many possible acyclic vote graphs that remain after their removal. When considering approaches for removing cycles, we emphasize two important properties. First, the procedure has to result in a unique partial order. Second, the procedure has to be computationally efficient. These properties are especially important in settings where equitable treatment of alternatives is an important consideration. Suppose that there was an optimal partial order in which alternative $i$ was preferred to alternatives $j$ and $k$ and another optimal partial order in which $j$ was preferred to $i$ and $k$. Then it would not be clear whether the method should report an aggregate preference for $i$ over $j$ or $j$ over $i$. And, further, in settings with many alternatives or voters, it could be computationally prohibitive to determine whether there is also a third optimal partial order for which $k$ was preferred to $i$ and $j$.

We first consider removing a \textit{maximum circulation}, which is a circulation containing a maximum total number of votes. We show that this procedure permits the efficient computation of a consensus partial order, but does not result in a unique partial order. We subsequently introduce the \emph{strong maximum circulation} as a maximum circulation whose removal returns as many aggregate preferences as possible. In other words, the remaining vote graph has as many distinct arcs as possible.  We refer to the partial order induced by this acyclic graph as a \emph{strong partial order}.  The strong partial order satisfies the desired properties of uniqueness and efficiency.  In particular, we establish the following:

\begin{enumerate}
    \item There is a unique strong partial order.
    \item The strong partial order is the union of the partial orders induced by removing any maximum circulation.
    \item One can obtain a strong maximum circulation by solving a single minimum cost flow problem.
    \item  There are optimality conditions for the strong maximum circulation problem based on a relaxation of strict complementary slackness conditions for the maximum circulation problem.
\end{enumerate}

A natural alternative approach would be to remove as few votes on cycles as possible to produce an acyclic remainder. We refer to this type of maximal circulation as a \textit{minimum maximal circulation}. Despite its intuitive appeal, this approach does not satisfy the properties that we were seeking. First, we show that it is NP-hard to find a minimum maximal circulation. Second, we show that it does not lead to a unique solution.  Worse yet, there may be two minimum maximal circulations that lead to conflicting partial orders.  That is, after removing the first minimum maximal circulation from the vote graph, alternative $i$ is preferred to alternative $j$. After removing the second one, $j$ is preferred to $i$.

We also relate our cycle removal-based approach to preference aggregation to existing optimization-based approaches, focusing on Kemeny's method \citep{kemeny_mathematics_1959}. Kemeny's method is to remove the fewest number of votes so that the remaining set of votes is non-conflicting. Equivalently, Kemeny's method is to find a partial order consistent with the largest weight-directed acyclic graph among a set of votes \citep[see, e.g.,][]{jagabathula_personalized_2022}. Although Kemeny's method also has some desirable theoretical properties, e.g., it can be justified in terms of minimizing the swap distance from a set of individual rankings \citep{fischer2016weighted}, it also suffers similar drawbacks to the problem of removing a minimum maximal circulation from the vote graph in terms of computational complexity \citep{garyjohnson,bartholdi_voting_1989} and the multiplicity of (conflicting) optimal solutions \citep{muravyov_dealing_2014, yoo_new_2021}.

The objective in Kemeny's model appears to be diametrically opposed to the objective of the strong maximum circulation model.  Kemeny's method removes as few votes as possible.  A strong maximum circulation removes as many votes as possible. However, there appears to be an interesting type of duality between the two approaches as revealed by the  Hochbaum and Levin (HL) separation-deviation model \citep[see][]{hochbaum_methodologies_2006}. Specifically, Kemeny's model can be transformed into a special case of the HL model with a non-convex ``0-1'' loss function. If this objective is approximated by a convex hinge loss function, the resulting optimization problem is the dual of the maximum circulation problem.  As such, the maximum circulation model can be viewed as the dual of the ``convexification'' of Kemeny's model.

The paper proceeds as follows. In the next section, we restate the problem of preference aggregation in terms of the isolation of an acyclic subgraph in a capacitated directed network, or vote graph. In Section \ref{sec:strong-max-circ}, we introduce the concept of a strong maximum circulation as a method for eliminating cycles. In Section \ref{sec:certif} we provide an efficient algorithm for finding a strong maximum circulation. In Section \ref{sec:HL}, we describe our method in terms of a convex relaxation of Kemeny's model and demonstrate its relationship to \cite{hochbaum_methodologies_2006}. Finally, we consider the removal of minimum maximal circulations in Section \ref{sec:minmax}.  We show that an aggregation approach based on eliminating minimum maximal circulations faces many of the same computational and normative problems as Kemeny's method.

\section{Background and Preliminaries}\label{sec:prelim}

\noindent \textbf{Vote graph and circulations.} 
A vote graph is a directed graph $G=(V,A)$ with the set of alternatives $V$ represented by the nodes (or vertices) of the graph. Let $n = |V|$ and $m=|A|$.  Each arc $(i,j)\in A$ represents the (strict) preference of $i$ over $j$ by at least one voter.  If there are $q_{ij}\geq 1$ voters that express the same preference $(i,j)$, we represent it by associating a capacity $q_{ij}$ with arc $(i,j)$.  Alternatively, one could represent the vote graph as a {\em multi-graph} with $q_{ij}$ arcs going from $i$ to $j$.   An example of a vote graph represented as multi-graph is given in Figure \ref{fig:multi-graph}, and the same vote graph represented as a weighted, or capacitated, simple graph is given in Figure \ref{fig:capacitated-graph}. Note that it is possible for both $q_{ij}$ and $q_{ji}$ to be positive, meaning that there are voters that prefer $i$ to $j$ and voters that prefer $j$ to $i$.  In that, case we have $\min \{q_{ij},q_{ji}\}$ 2-cycles of votes that contradict each other. A directed graph $G=(V,A)$ is called \textit{acyclic} if there are no directed cycles in $G$.

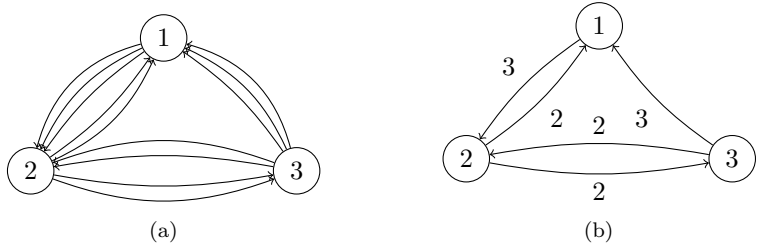
\begin{figure}[htb]
\[\hbox{
\subfloat[]{\label{fig:multi-graph}
\begin{tikzpicture}[main/.style = {draw, circle}, node distance=2.5cm] 
\node[main] (1) {$1$}; 
\node[main] (2) [below left of=1] {$2$}; 
\node[main] (3) [below right of=1] {$3$}; 
\path[bend right=10] (1) edge [->] (2);
\path[bend right=20] (1) edge [->]  (2);
\path[bend right=30] (1) edge [->]  (2);
\path[bend right=10] (2) edge [->] (1);
\path[bend right=20] (2) edge [->] (1);
\path[bend right=10] (3) edge [->]  (1);
\path[bend right=20] (3) edge [->]  (1);
\path[bend right=30] (3) edge [->]  (1);
\path[bend right=10] (2) edge [->] (3);
\path[bend right=20] (2) edge [->] (3);
\path[bend right=10] (3) edge [->] (2);
\path[bend right=20] (3) edge [->] (2);
\end{tikzpicture}
}
}
\\
\hspace{0.5in}
\hbox{
\subfloat[]{\label{fig:capacitated-graph}
\begin{tikzpicture}[main/.style = {draw, circle}, node distance=2.5cm] 
\node[main] (1) {$1$}; 
\node[main] (2) [below left of=1] {$2$}; 
\node[main] (3) [below right of=1] {$3$}; 
\path[bend right=10] (1) edge [->] node [above left] {3} (2);
\path[bend right=10] (2) edge [->] node [below right] {2} (1);
\path[bend left=10] (3) edge [->] node [below left] {3} (1);
\path[bend right=10] (2) edge [->] node [below] {2} (3);
\path[bend right=10] (3) edge [->] node [above] {2} (2);
\end{tikzpicture}
}
}
\\
\]
    \caption{The representation of the vote graph as a: (a) multi-graph,  (b) capacitated graph.}
    \label{fig:capacitated-multi}
\end{figure}

We denote vectors by boldface characters. Let the vector of capacities of the arcs of $A$ in the vote graph be $\q$, where $\q \in \mathbb{R}_+^m$. In our discussion, we typically assume that $q_{ij}$ represents the count of voters who prefer $i$ to $j$ and is therefore integral. However, as we show in the next section, it may be desirable to remove circulations of votes that leave fractional remainders from which the consensus partial order is then induced. As such, we permit capacities $\q$ to be non-integral. Note further that the all results in the paper straightforwardly generalize to settings when $\q$ is initially fractional, as discussed in Subsection \ref{sec:strong-max-circulations} and Subsection \ref{sec:more-efficient}.


An alternative, compact representation of the vote graph is denoted by $G[V,\q ]$ with the interpretation that this graph is $G=(V,A)$ where $A= A(\q)$ and $A(\q)=\{ (i,j): q_{ij}>0 \}$. 

A  (flow) vector $\x \in \mathbb{R}_+^m$ is said to be a {\em circulation} in $G[V,\q ]$ if it satisfies  that $0 \le x_{ij} \le q_{ij}$ for all $(i,j)\in A(\q)$, and $\sum_{j: (i,j) \in A} x_{ij}-\sum_{j:  (j,i) \in A} x_{ji}=0$ for all $i\in V$. That is, the incoming flow into each node is equal to the outgoing flow from that node. 

A circulation $\x'$ is  {\em maximal} if there is no circulation $\x'' \ne \x'$ such that $\x'' \ge  \x'$.  
\begin{observation}
\label{obs:max-circ-acyclic}
For  a maximal circulation $\x'$ in $G[V,\q ]$, the graph $G[V,\q -\x' ]$ is acyclic.  Equivalently, $G' = (V,A(\q -\x'))$ is acyclic.
\end{observation}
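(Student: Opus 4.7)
The plan is to prove the contrapositive: if the residual graph $G[V,\q-\x']$ contains a directed cycle, then $\x'$ is not maximal. This is a standard flow-augmentation argument; no new machinery is needed beyond the definition of a circulation.

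First I would assume, toward contradiction, that $\x'$ is a maximal circulation but that $G[V,\q-\x']$ contains a directed cycle $C$. By the definition of the arc set $A(\q-\x')$, every arc $(i,j)\in C$ satisfies $q_{ij}-x'_{ij}>0$. Hence
\[
\epsilon := \min_{(i,j)\in C}\bigl(q_{ij}-x'_{ij}\bigr) > 0.
\]

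Next I would construct a new flow $\x''$ by augmenting $\epsilon$ units of flow around $C$:
\[
x''_{ij} := \begin{cases} x'_{ij} + \epsilon & \text{if } (i,j)\in C, \\ x'_{ij} & \text{otherwise.} \end{cases}
\]
Then I would verify that $\x''$ is a circulation in $G[V,\q]$. Capacity bounds hold because $x''_{ij}\leq x'_{ij}+(q_{ij}-x'_{ij})=q_{ij}$ for every $(i,j)\in C$ by the choice of $\epsilon$, and trivially on all other arcs. Flow conservation holds because $C$ is a \emph{directed} cycle: each node $i\in V(C)$ has exactly one incoming and one outgoing arc of $C$, so the augmentation adds $\epsilon$ to both the inflow and the outflow of $i$, leaving the conservation equation undisturbed. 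Nodes not on $C$ are unaffected.

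Finally, since $\epsilon > 0$, we have $\x''\geq \x'$ and $\x''\neq \x'$, contradicting the maximality of $\x'$. Therefore no directed cycle can exist in $G[V,\q-\x']$, which is exactly the claim that $G'=(V,A(\q-\x'))$ is acyclic. The only point requiring a bit of care is noting that $C$ being a directed cycle is what makes the augmentation preserve flow conservation at every node; otherwise the argument is routine.
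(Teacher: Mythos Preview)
Your argument is correct and is exactly the approach taken in the paper, which simply notes that if $G'$ contained a cycle then $\x'$ could be augmented by that cycle, contradicting maximality. You have supplied the routine details (the choice of $\epsilon$, verification of capacity bounds and flow conservation) that the paper leaves implicit.
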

This follows since if $G'$ contains a cycle, then $\x'$ can be appended by that cycle, contradicting its being maximal.

We next define three specific types of maximal circulations: (1) minimum maximal, (2) maximum, and (3) strong maximum.
\begin{enumerate}
\item A {\em minimum maximal circulation} is a maximal circulation that contains the smallest number of arcs counted with their multiplicity. That is, it is a circulation $\x'$ that minimizes $\sum _{(i,j)\in A} x_{ij}$ among all maximal circulations.
\item A {\em maximum circulation} is a circulation that contains the largest number of arcs counted with their multiplicity. That is, it is a circulation $\x'$ that solves the following:
\begin{align}
    \label{eqn:max-circ}
      \max & \sum_{(i,j)\in A}  x_{ij}\\
      \text{(Maximum Circulation) \  \ \ subject to } & \sum_{j: (i,j) \in A} x_{ij}-\sum_{j:  (j,i) \in A} x_{ji}=0  \ \ \ \forall i\in V \nonumber\\
     & 0 \le x_{ij} \le q_{ij} \ \ \  \forall (i,j) \in A. \nonumber
\end{align}
By construction, all maximum circulations are also maximal.
\item A {\em strong maximum circulation}, denoted $\x^*$, is a maximum circulation for which the minimum number of arcs in $G$ are at capacity. That is, for any other maximum circulation $\x'$,  $|A(\q - \x^*)| \ge  |A(\q - \x')|$.\\
\end{enumerate}

\noindent\textbf{Acyclic graphs, transitive closures, and partial orders}.
For an acyclic graph $G = (V, A(\q))$, the {\em transitive closure} of $G$ contains all pairs $(i,j)$ such that there is a directed path in $G$ from $i$ to $j$.  We denote the arc set of the transitive closure as $A^{TC}(\q)$ and denote the transitive closure of $G$ as $G^{TC} = (V, A^{TC}(\q))$. When $\q$ is clear from context, we refer to the transitive closure compactly as $A^{TC}$.

A {\em partial order} is a transitive and antisymmetric\footnote{A binary relation $R$ on a set $X$ is antisymmetric if for all $i,j$ in $X$, $i R j$ and $j R i$ implies $i=j$.} relation among the elements of a set.  We denote the relation using the symbol ``$\succ$.''  Associated with each acyclic graph $G = (V, A)$ is a {\em partial order}, obtained as follows: for elements $i, j \in V$, $i \succ j$ if and only if $(i,j) \in A^{TC}$.

If $G$ is acyclic, we sometimes use $A^{TC}$ to represent both the transitive closure and the corresponding partial order. Accordingly, we use $A^{TC}(\q - \x')$ to represent the transitive closure of the arcs that remain following the elimination of maximal circulation $\x'$ and the partial order induced by the removal of $\x'$. 

Partial orders $P$ and $Q$ defined on the same set of elements are said to \emph{conflict} if there are elements $i$ and $j$ such that $i \succ_P j$ and $j \succ_Q i$.  Two partial orders defined on the same set of elements are said to be \emph{non-conflicting} if they do not conflict.\\


\noindent \textbf{Strong maximum circulations, strong partial order, strong arcs, and strong subgraph.}
We will show that for any two strong maximum circulations $\x'$ and $\x''$, $A(\q - \x' ) = A(\q - \x'')$, and thus   $A^{TC}(\q - \x')$  $= A^{TC}(\q - \x'')$.   That is, the set of arcs that remains following the elimination of a strong maximum circulation $\x^*$ is independent of the strong maximum circulation. 

We refer to $A^{TC}(\q-\x^*)$ as the \textit{strong partial order}, and denote it more simply as $A^{SP}$. We say that an arc $(i, j)$ is \emph{strong} if there is a maximum circulation $\x'$ such that $x'_{ij} < q_{ij}$, and thus $(i, j) \in A(\q - \x')$. The \emph{strong subgraph} of $A$ is the set $A^* = \{(i, j) : (i, j) \text{ is strong} \}$.  We will show that the transitive closure of $A^*$ is equal to $A^{SP}$.\\

\noindent \textbf{Network Flow.} We next introduce some well-known results from the network flow literature that will be used in the computational analysis of our algorithm. A more detailed treatment is available in \citet{ahuja_network_1993}.\\

\noindent{\bf The minimum cost network flow problem (MCNF).} The minimum cost network flow problem is defined on a directed graph $G=(V,A)$ where every arc $(i,j)\in A$ has a nonnegative arc capacity $q_{ij}$ and cost $c_{ij}$ per unit flow. 
Every node $i\in V$ has a {\em supply} value $b_i $ associated with it that could be positive, negative or zero, and such that $\sum _{i\in V} b_i =0$.  A vector $\x \in \mathbb{R}_+^m$ is said to be a {\em feasible flow} if $0 \le x_{ij} \le q_{ij}$ for all $(i,j)\in A(\q)$, and for all $i\in V$, $ \sum_{j: (i,j) \in A} x_{ij}-\sum_{j:  (j,i) \in A} x_{ji}=b_i$.    The formulation of MCNF is given in (\ref{eqn:MCNF}):

\begin{align}
    \label{eqn:MCNF}
      \min & \sum_{(i,j)\in A} c_{ij} x_{ij}\\
      \text{(MCNF) \  \ \ subject to } & \sum_{j: (i,j) \in A} x_{ij}-\sum_{j:  (j,i) \in A} x_{ji}=b_i  \ \ \ \forall i\in V \nonumber\\
     & 0 \le x_{ij} \le q_{ij} \ \ \  \forall {(i,j) \in A.} \nonumber
\end{align}

The maximum circulation problem is a special case of MCNF in which $b_i=0$ for all $i\in V$, the minimization is replaced by maximization, and each cost is equal to $1$.  (This is equivalent to a minimization problem with costs of $-1$.)

\medskip

{\bf Complexity of solving MCNF and maximum circulation}.  Using the successive approximation algorithm of \cite{goldberg_solving_1987}, the MCNF problem is solved in running time $O(nm \log n \log(nC))$, where $n=|V|$ $m=|A|$ and $C=\max _{(i,j)\in A} |c_{ij}|$.   Goldberg and Tarjan also provide the running time bound of $O(n^{5/3}m^{2/3} \log nC)$, which improves on $O(nm \log n \log(nC))$ when $ \frac{n^2}{m} < \log^3 n$.

For the maximum circulation problem, $C=1$.  
Therefore, the running time to solve the maximum circulation is $O(nm \log^2 n )$.

\section{Maximum Circulations and Strong Maximum Circulations} \label{sec:strong-max-circ}

In this section, we analyze preference aggregation methods that are based on eliminating maximum circulations.  

Specifically, the section includes the following results: 
\begin{enumerate}

\item Let $A^*$ denote the strong subgraph. If $\x^*$ is any strong maximum circulation, then $A(\q - \x^*) = A^*$, (Subsection \ref{sec:max-circ-non-conflict}).  This implies that $A^*$ is acyclic. 

\item An algorithm that computes a strong maximum circulation by solving $m+1$ maximum circulation problems.  (Subsection \ref{sec:strong-max-circulations}).
\end{enumerate}

\subsection{ Generating a strong maximum circulation based on the acyclic set $A^*$}\label{sec:max-circ-non-conflict}

Recall that the strong subgraph $A^*$ includes each arc that is not at capacity in some maximum circulation.  Here, we show how to obtain a strong maximum circulation based on $A^*$.  We also establish that $A^*$ is acyclic.

We first start with the special case in which the vote graph is Eulerian. The vote graph is \emph{Eulerian} if $ \sum_{j: (i,j) \in A} q_{ij}=\sum_{j:  (j,i) \in A} q_{ji}$. The following lemma is a well known result about Eulerian networks.

\begin{lemma}
    If the vote graph $G = (V, A(\q))$ is Eulerian, then the flow $\x^* = \q$ is the unique maximum circulation, and $A^* = \emptyset$. If the vote graph is not Eulerian, then $\x^* = \q$ is not a circulation, and $A^* \neq \emptyset$.
\end{lemma}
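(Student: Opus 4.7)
The plan is to split the claim into its two directions (Eulerian and non-Eulerian) and verify each directly from flow conservation and the definition of the strong subgraph.

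For the Eulerian direction, I would first check that $\x = \q$ itself is feasible: the capacity constraints $0 \le q_{ij} \le q_{ij}$ hold trivially, and flow conservation at every node $i$ is exactly the Eulerian condition $\sum_{j:(i,j)\in A} q_{ij} = \sum_{j:(j,i)\in A} q_{ji}$. Hence $\q$ is a circulation. It is maximum because any feasible circulation $\x'$ satisfies $x'_{ij} \le q_{ij}$ for every arc, so $\sum_{(i,j)\in A} x'_{ij} \le \sum_{(i,j)\in A} q_{ij}$. Uniqueness follows from the same inequality: if $\x'$ is any maximum circulation, then $\sum_{(i,j)} (q_{ij} - x'_{ij}) = 0$ together with $x'_{ij} \le q_{ij}$ forces $x'_{ij} = q_{ij}$ for every arc. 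This means no arc is below its capacity in any maximum circulation, so no arc is strong, giving $A^* = \emptyset$.

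For the non-Eulerian direction, the failure of the Eulerian condition immediately implies $\q$ is not a circulation, since flow conservation fails at some node $i$. To show $A^* \neq \emptyset$, pick any node $i$ at which the Eulerian condition is violated; without loss of generality (by symmetry of the argument) assume the out-capacity exceeds the in-capacity, i.e., $\sum_{j:(i,j)\in A} q_{ij} > \sum_{j:(j,i)\in A} q_{ji}$. Then for any circulation $\x'$, flow conservation at $i$ together with capacity bounds yields
\[
\sum_{j:(i,j)\in A} x'_{ij} \;=\; \sum_{j:(j,i)\in A} x'_{ji} \;\le\; \sum_{j:(j,i)\in A} q_{ji} \;<\; \sum_{j:(i,j)\in A} q_{ij}.
\]
Thus at least one out-arc $(i,j)$ of $i$ must satisfy $x'_{ij} < q_{ij}$. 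Applying this observation to any maximum circulation $\x'$ exhibits a strong arc, so $A^* \neq \emptyset$.

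I do not anticipate a serious obstacle: the argument is entirely structural, relying only on the definitions of circulation, maximum circulation, and strong arc, together with the trivial componentwise inequality $\x' \le \q$. The only minor subtlety is being explicit that in the Eulerian case the uniqueness forces every arc to be at capacity in every maximum circulation (so no arc is strong), which directly gives $A^* = \emptyset$ rather than requiring any further argument.
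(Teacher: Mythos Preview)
Your argument is correct and complete. The paper itself does not give a proof of this lemma; it simply states that this is ``a well known result about Eulerian networks'' and moves on. Your proof fills in exactly the standard details one would expect: feasibility and optimality of $\q$ from the Eulerian balance condition, uniqueness from the componentwise inequality $\x' \le \q$ summing to an equality, and in the non-Eulerian case the existence of a strong arc from the imbalance at some node (your WLOG is justified since the total out-capacity over all nodes equals the total in-capacity, so a strict deficit at one node forces a strict surplus at another).
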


\begin{corollary}
    Let $A^*$ denote the strong subgraph of the vote graph $G = (V, A(\q))$.  Then $A^* = \emptyset$ if and only if the vote graph is Eulerian.  
\end{corollary}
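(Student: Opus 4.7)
The plan is to derive this corollary immediately from the preceding lemma, which already contains both implications of the desired biconditional. Specifically, the first assertion of the lemma states that if $G$ is Eulerian then $A^* = \emptyset$, which is exactly the ``if'' direction of the corollary. The second assertion states that if $G$ is not Eulerian then $A^* \neq \emptyset$; taking the contrapositive yields that $A^* = \emptyset$ implies $G$ is Eulerian, which gives the ``only if'' direction.

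There is no real obstacle here---the corollary is simply the biconditional packaging of two one-directional implications that have already been proved in the lemma. The only minor step is recognizing that the lemma is phrased as two separate conditionals (one for the Eulerian case and one for the non-Eulerian case) rather than as a single ``if and only if'' statement, and that combining them covers both directions exhaustively since ``$G$ is Eulerian'' and ``$G$ is not Eulerian'' partition the possibilities.

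To write out the proof, I would cite each clause of the lemma in turn: by the first clause, if $G$ is Eulerian then $A^* = \emptyset$; conversely, by the contrapositive of the second clause, if $A^* = \emptyset$ then $G$ must be Eulerian (since otherwise $A^*$ would be nonempty). Together these give the claimed characterization. The entire argument fits in two or three sentences and requires no new ideas beyond the lemma.
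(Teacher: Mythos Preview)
Your proposal is correct and matches the paper's approach: the paper states this corollary immediately after the lemma without any separate proof, treating it as the obvious biconditional repackaging of the lemma's two clauses. Your two-to-three sentence argument is exactly the intended justification.
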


\begin{lemma}
\label{lem:strong-partial}

Let $G = (V, A(\q))$ be a non-Eulerian vote graph. Let $A^*$ denote the strong subgraph of $G$.   
For each arc $a \in A^*$, let $\x^a$ be a maximum circulation for which the flow in arc $a$ is less than $q_a$.  Let $\x ^* =\frac {\sum_{a \in A^*}\x ^a}{|A^*|}$.  Then 
\begin{enumerate}
    \item $A^* = A(q - \x^*)$;
    \item $\x^*$ is a strong maximum circulation, and;
    \item $A^*$ is acyclic.
\end{enumerate}
\end{lemma}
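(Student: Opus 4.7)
The plan is to exploit a single structural fact: the set of maximum circulations is the optimal face of the linear program \eqref{eqn:max-circ}, so it is convex. Since the objective is linear, the uniform average $\x^* = \tfrac{1}{|A^*|}\sum_{a\in A^*}\x^a$ is itself a maximum circulation, and hence (as noted just before Observation~\ref{obs:max-circ-acyclic}) also a maximal circulation. I expect all three conclusions to follow from this observation with little additional work.

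For part~(1) I would prove the two set inclusions separately. The inclusion $A(\q-\x^*)\subseteq A^*$ is immediate from the definition of strong arc: if $x^*_a < q_a$ then $\x^*$ itself is a maximum circulation witnessing $a\in A^*$. The reverse inclusion $A^*\subseteq A(\q-\x^*)$ is where the averaging construction pulls its weight: for any fixed $a\in A^*$, the summand $\x^a$ contributes $x^a_a < q_a$ strictly, while every other summand $\x^{a'}$ contributes only the weak inequality $x^{a'}_a \le q_a$; averaging a collection of weak inequalities with at least one strict inequality yields a strict inequality, so $x^*_a < q_a$. Part~(2) is then a short counting argument: by definition of $A^*$, any maximum circulation $\x'$ can be strictly below capacity only on strong arcs, so $A(\q-\x')\subseteq A^*$, and combining with part~(1) gives $|A(\q-\x')| \le |A^*| = |A(\q-\x^*)|$, which is exactly the defining condition of a strong maximum circulation. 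Part~(3) then falls out for free: $\x^*$ is maximal, so Observation~\ref{obs:max-circ-acyclic} yields that $(V,A(\q-\x^*))$ is acyclic, and part~(1) identifies this graph with $(V,A^*)$.

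The main obstacle is conceptual rather than technical. A priori, every strong arc $a\in A^*$ has its own witnessing maximum circulation $\x^a$, and different witnesses may saturate different arcs; one must somehow produce a single maximum circulation that is simultaneously strictly below capacity on every strong arc. The equally-weighted average accomplishes this in one line, using only convexity of the optimal face of the linear program and the elementary observation that averaging weak inequalities with at least one strict inequality preserves strictness. Once that construction is in hand, I do not anticipate any further technical difficulty, and the three parts should follow essentially as outlined above.
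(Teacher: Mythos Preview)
Your proposal is correct and follows essentially the same route as the paper's own proof: average the witnessing maximum circulations, use convexity of the optimal face to conclude the average is itself a maximum circulation, then verify the two set inclusions $A^* \subseteq A(\q-\x^*)$ and $A(\q-\x^*)\subseteq A^*$ exactly as you describe, with acyclicity of $A^*$ and the strongness of $\x^*$ falling out immediately. If anything, your write-up of part~(2) is more explicit than the paper's, which compresses the argument ``$A(\q-\x')\subseteq A^*$ for every maximum circulation $\x'$, hence $|A(\q-\x')|\le |A^*|=|A(\q-\x^*)|$'' into a single clause.
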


\begin{proof}
$\x^*$ is the convex combination of maximum circulations, and is thus a maximum circulation.  No arc of $A^*$ is at its upper bound in $\x^*$.  Therefore, $A^* \subseteq A(q - \x^*)$.  Every arc in $A\setminus A^*$ is at its upper bound in every maximum circulation.  Therefore, $(A\setminus A^*) \cap A(q - \x^*) = \emptyset$.  Hence,
$A(q - \x^*) \subseteq A^*$.  It follows that $A^* = A(q - \x^*)$.  Since  $\x^*$ is a maximum circulation, $ A(q - \x^*)$ must be acyclic. And since $A^* = A(q - \x^*)$ it follows that $\x^*$ is a strong maximum circulation.  
 \end{proof}

We now show that the arc set $A^*$ is equal to the set of arcs remaining after removing any strong maximum circulation from the graph. It is therefore unique.

\begin{corollary}\label{lem:strong-max}
      
    A maximum circulation $\x'$ in the vote graph $G = (V, A(\q))$ is strong if and only if $A(\q-\x') = A^*$. 
\end{corollary}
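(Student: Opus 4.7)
The plan is to prove both directions as a short consequence of Lemma \ref{lem:strong-partial} together with the definition of the strong subgraph $A^*$. The key observation is that $|A^*|$ is precisely the maximum of $|A(\q-\x')|$ taken over all maximum circulations $\x'$, and this is achieved exactly by strong maximum circulations.

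First I would establish the universal containment $A(\q-\x') \subseteq A^*$ for every maximum circulation $\x'$. This is immediate from the definition of the strong subgraph: if $(i,j) \in A(\q - \x')$, then $x'_{ij} < q_{ij}$, and since $\x'$ is a maximum circulation, $(i,j)$ qualifies as a strong arc, so $(i,j) \in A^*$. Next, I would invoke Lemma \ref{lem:strong-partial} to obtain a specific strong maximum circulation $\bar{\x}$ with $A(\q-\bar{\x}) = A^*$; in particular $|A(\q-\bar{\x})| = |A^*|$. By the definition of a strong maximum circulation, $\bar{\x}$ satisfies $|A(\q-\bar{\x})| \ge |A(\q-\x')|$ for every maximum circulation $\x'$, and hence $|A^*|$ is the maximum value of $|A(\q-\x')|$ over all maximum circulations.

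From here both directions are one-liners. If $\x'$ is strong, then by definition $|A(\q-\x')| \ge |A(\q-\bar{\x})| = |A^*|$ (both are strong maximum circulations and thus mutually optimal), so $|A(\q-\x')| = |A^*|$; combined with the containment $A(\q-\x') \subseteq A^*$, this forces $A(\q-\x') = A^*$. Conversely, if $A(\q-\x') = A^*$ for a maximum circulation $\x'$, then $|A(\q-\x')| = |A^*|$ is the extremal value, so $\x'$ achieves the defining property of a strong maximum circulation.

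There is no real obstacle here; the only subtle point is recognizing that Lemma \ref{lem:strong-partial} is what upgrades the general containment $A(\q-\x') \subseteq A^*$ into an extremality statement about cardinalities, after which matching cardinalities inside a containment collapses the two sets into equality.
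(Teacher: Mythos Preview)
Your proposal is correct and follows essentially the same approach as the paper: use the containment $A(\q-\x') \subseteq A^*$ (from the definition of strong arcs) together with Lemma~\ref{lem:strong-partial} to pin down $|A^*|$ as the extremal cardinality, then match cardinalities inside the containment. Your write-up is in fact slightly more complete than the paper's, which only spells out the forward direction explicitly.
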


    \begin{proof}

Let $x^*$ be defined as in Lemma \ref{lem:strong-max}.  Suppose that $\x'$ is a strong maximum circulation.  It follows that
    $|A(\q-\x')| = |A^*| = |A(\q-\x^*)|$.   Moreover, $A(\q-\x') \subseteq A^*$.  Therefore, $A(\q-\x') = A^*$.  
\end{proof}

\begin{corollary}\label{lem:non-conflictx}
For any two maximum circulations $\x'$ and $\x''$, the partial orders induced by $A(\q - \x')$ and $A(\q - \x'')$ are non-conflicting.
\end{corollary}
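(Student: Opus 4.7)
The plan is to show that both partial orders $A^{TC}(\q-\x')$ and $A^{TC}(\q-\x'')$ embed into the strong partial order $A^{SP}$, and then to use antisymmetry of $A^{SP}$ to rule out conflict. The substantive work is already done in Lemma \ref{lem:strong-partial}, which tells us that the strong subgraph $A^*$ is acyclic; the corollary should therefore follow by a short sequence of inclusions.

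First, I would unpack the definition of the strong subgraph to observe that for any maximum circulation $\x'$, every arc $(i,j) \in A(\q-\x')$ satisfies $x'_{ij} < q_{ij}$, and hence is strong by definition. This gives $A(\q - \x') \subseteq A^*$. Taking transitive closures preserves inclusion, so $A^{TC}(\q-\x') \subseteq A^{TC}(A^*) = A^{SP}$. The identical reasoning gives $A^{TC}(\q-\x'') \subseteq A^{SP}$. Thus both induced partial orders are subrelations of the common relation $A^{SP}$.

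Next, I would invoke Lemma \ref{lem:strong-partial} to conclude that $A^*$ is acyclic, so its transitive closure $A^{SP}$ is antisymmetric: there is no pair $i \ne j$ with both $(i,j)$ and $(j,i)$ in $A^{SP}$. To finish, suppose for contradiction that $A^{TC}(\q-\x')$ and $A^{TC}(\q-\x'')$ conflict, so that $i \succ j$ in the first and $j \succ i$ in the second for some $i,j$. Then $(i,j)$ and $(j,i)$ both lie in $A^{SP}$, contradicting antisymmetry.

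I do not anticipate a genuine obstacle here; the only place one might slip is in conflating the arc set $A(\q - \x')$ with its transitive closure, so I would be careful to state the inclusion at the level of arcs and then pass to transitive closures explicitly. Everything else is a mechanical consequence of the definitions and of Lemma \ref{lem:strong-partial}.
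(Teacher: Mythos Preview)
Your proposal is correct and follows essentially the same approach as the paper: both argue by embedding $A(\q-\x')$ and $A(\q-\x'')$ into $A^*$ and then invoking the acyclicity of $A^*$ from Lemma~\ref{lem:strong-partial}. If anything, your version is slightly more careful than the paper's, since you explicitly pass to transitive closures before deriving the contradiction, whereas the paper writes as though the conflicting preferences give \emph{arcs} $(i,j)$ and $(j,i)$ in $A^*$ directly rather than directed paths.
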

\begin{proof}
Suppose that for two maximum circulations $\x'$ and $\x''$ there is a conflict. Therefore there is a pair $i$ and $j$ such that $i \succ_{\x'} j$ and $j \succ_{\x''} i$. 
By definition, $ A^*$ contains both arcs $(i,j)$ and $(j,i)$ which contradicts that $ A^*$ is acyclic.
\end{proof}

Recall that $A^{SP}$ is the strong partial order, which is the partial order induced by $A^*$, or equivalently by $A(\q-\x')$, where $\x'$ is any strong maximum circulation. According to Corollary \ref{lem:strong-max}, if $\x'$ is a maximum circulation and $A(\q - \x') = A^*$, then $\x'$ is a strong maximum circulation. Corollary \ref{lem:strong-max} would not be true if one relaxed the initial condition and only required that $\x'$ is a feasible circulation.  In the following example, $\x'$ is a feasible circulation such that $A(\q - \x') = A^*$; but $\x'$ is not a maximum circulation. The example consists of a graph with node set $\{1, 2, 3\}$.  There are arcs $(1, 2)$, $(2, 3)$, and $(1, 3)$, each with capacity 2.  There is also an arc $(3, 1)$ with capacity 1.  The maximum circulation $\x^*$ has  $x^*_{1,2} = x^*_{2,3} = x^*_{3,1} = 1$ and $x^*_{1, 3} = 0$. It is a strong maximum circulation, and $A^* = A(q - \x^*) = \{(1, 2), (1, 3), (2, 3)\}$.  There is a also a feasible non-maximum circulation $\x'$ with $x'_{1,3} =  x'_{3,1} = 1$ and $x'_{1, 2} = x'_{2,3} = 0$. 
 Note that $A^* = A(q - \x') = \{(1, 2), (1, 3), (2, 3)\}$.    

\subsection{Constructing a Strong Maximum Circulation} 
\label{sec:strong-max-circulations}
 
The development of the strong maximum circulation $\x^*$ in Lemma \ref{lem:strong-max} is not a constructive algorithm because it does not construct the flows $x^a$ for $a \in A^*$.  The next algorithm efficiently constructs these flows one arc $a$ at a time checking whether a maximum circulation flow has the flow on the arc strictly less than $q_a$.  For simplicity, it is initially assumed that all data is integral.\\

{\noindent \underline{\sf Algorithm 1}}:  \\
01.  \hspace{2 em}  $z^* :=$  optimum objective value for the maximum circulation problem (\ref{eqn:max-circ}). \\ 
02.  \hspace{2 em}    $A^*:=\emptyset$
\\
03.  \hspace{2 em}  {\bf for each} arc $(i, j)\in A$ {\bf do};\\ 
04.  \hspace{4 em}  let $z_{ij}$ be the optimal objective value for the maximum circulation problem\\ 
. \hspace{7 em} with the added constraint that $x_{ij} \le q_{ij} - 1$.\footnote{Subtracting 1 works because the vector $\q$ is integral, and there is always an integer optimum flow if all data is integral.}\\
05.  \hspace{4 em} {\bf if} $z_{ij} = z^*$, {\bf then} $A^*:=A^* \cup \{(i,j)\}$ \\
06.   \hspace{2 em}   {\bf endfor}\\
07. \hspace{2 em} $A^{SP}$ is the transitive closure of $A^*$\\

The algorithm can be easily modified to work for data that is not integral. For non-integral $\q$,  the constraint in line 04 of Algorithm 1 would be modified to ``$x_{ij} \le q_{ij} - \epsilon$'', where $\epsilon$ is chosen to be sufficiently close to 0. Additionally, note that the algorithm may deliver $A^*$ empty if the vote graph is Eulerian.

Algorithm 1 determines $A^*$ by solving $m+1$ maximum circulation problems, including one for determining the optimal objective value. Each maximum circulation problem is solvable in $O(nm \log^2 n)$ time. In Section \ref{sec:certif}, we show how to determine $A^*$ by solving just one minimum cost flow problem. 

Using Lemma \ref{lem:strong-partial}, the output of Algorithm 1 can deliver a strong maximum circulation. For each $a \in A^*$, let $\x^a$ be the maximum circulation found by the algorithm in which the flow in $a$ is less than its capacity. 
Assuming that the vote graph is not Eulerian, and thus $A^* \neq \emptyset$, the average of these $|A^*|$ maximum circulations, $\x ^* =\frac {\sum_{a \in A^*}\x ^a}{|A^*|}$, is a strong maximum circulation.

Even though $A^*$ and  $A^{SP}$ are uniquely determined, 
there may be an infinite number of different strong maximum circulations. For example, consider any flow $\x'$ obtained as follows:

\begin{align}
    \label{eqn:strong-circ2}
    \x'= \sum_{a \in A^*} \lambda_a \x ^a.
\end{align}

\noindent where $\{ \lambda _a \}_{a \in A^*}$ 
is any strictly positive vector whose components sum to 1. Then $\x'$ is a strong maximum circulation.

\subsubsection{An Illustrative Example}\label{sec:example}
We next illustrate the construction of the strong partial order and the strong maximum circulation for a graph $G[V,\q ]$.  Recall that the arcs in the graph $G$ are $A=A(\q) = \{ (i,j): i,j\in V {\rm and\ } q_{ij} >0\}$.  We represent the capacities vector $\q$ as an $n \times n$ matrix where only the positive entries correspond to the arcs of $G=[V,\q ]$.

\[q = 
 \left( \begin{array}{cccc}
0 & 1 & 0 & 1 \\
0 & 0 & 1 & 0 \\
1 & 0 & 0 & 0\\
0 & 0 & 1 & 0
\end{array} \right)
\hspace{0.5in}
G[V,\q]=
\vcenter{\hbox{
\begin{tikzpicture}[main/.style = {draw, circle}, node distance=2cm] 
\node[main] (1) {$1$}; 
\node[main] (2) [right of=1] {$2$}; 
\node[main] (3) [below of =2] {$3$}; 
\node[main] (4) [below of =1] {$4$}; 
\path (1) edge [->] node [below] {1} (2);
\path (2) edge [->] node [right] {1} (3);
\path (4) edge [->] node [below] {1} (3);
\path (1) edge [->] node [left] {1} (4);
\path (3) edge [->] node [below left] {1} (1);
\end{tikzpicture}
}}\]

In this graph there are only $5$ arcs and two integer-valued maximum circulations $\x'$ and $\x''$, each of value $3$. The circulation $\x'$ is obtained by sending one unit of flow on the cycle 1-2-3-1; the circulation $\x''$ is obtained by sending one unit of flow on the cycle 1-4-3-1.  These circulations and the remaining (residual flows) graphs $G[V,\q-\x']$ and $G[V,\q-\x'']$ are:

\[
G[V,\x']=
\vcenter{\hbox{
\begin{tikzpicture}[main/.style = {draw, circle}, node distance=2cm] 
\node[main] (1) {$1$}; 
\node[main] (2) [right of=1] {$2$}; 
\node[main] (3) [below of =2] {$3$}; 
\node[main] (4) [below of =1] {$4$}; 
\path (1) edge [->] node [below] {1} (2);
\path (2) edge [->] node [right] {1} (3);
\path (3) edge [->] node [below left] {1} (1);
\end{tikzpicture}
}}
\hspace{0.5in}
G[V,\q-\x']=
\vcenter{\hbox{
\begin{tikzpicture}[main/.style = {draw, circle}, node distance=2cm] 
\node[main] (1) {$1$}; 
\node[main] (2) [right of=1] {$2$}; 
\node[main] (3) [below of =2] {$3$}; 
\node[main] (4) [below of =1] {$4$}; 
\path (4) edge [->] node [below] {1} (3);
\path (1) edge [->] node [left] {1} (4);
\end{tikzpicture}
}}
\]

\[
G[V,\x'']=
\vcenter{\hbox{
\begin{tikzpicture}[main/.style = {draw, circle}, node distance=2cm] 
\node[main] (1) {$1$}; 
\node[main] (2) [right of=1] {$2$}; 
\node[main] (3) [below of =2] {$3$}; 
\node[main] (4) [below of =1] {$4$}; 
\path (4) edge [->] node [below] {1} (3);
\path (1) edge [->] node [left] {1} (4);
\path (3) edge [->] node [below left] {1} (1);
\end{tikzpicture}
}}
\hspace{0.5in}
G[V,\q-\x'']=
\vcenter{\hbox{
\begin{tikzpicture}[main/.style = {draw, circle}, node distance=2cm] 
\node[main] (1) {$1$}; 
\node[main] (2) [right of=1] {$2$}; 
\node[main] (3) [below of =2] {$3$}; 
\node[main] (4) [below of =1] {$4$}; 
\path (1) edge [->] node [below] {1} (2);
\path (2) edge [->] node [right] {1} (3);
\end{tikzpicture}
}}
\]

Consider next the circulation $\x^* = (\x' + \x'')/2$ and the respective remaining graph after removing $\x^*$, $G[V,\q-\x ^*]$:
 
\[
G[V,\x^*]=
\vcenter{\hbox{
\begin{tikzpicture}[main/.style = {draw, circle}, node distance=2cm] 
\node[main] (1) {$1$}; 
\node[main] (2) [right of=1] {$2$}; 
\node[main] (3) [below of =2] {$3$}; 
\node[main] (4) [below of =1] {$4$}; 
\path (1) edge [->] node [below] {$\frac{1}{2}$} (2);
\path (2) edge [->] node [right]  {$\frac{1}{2}$} (3);
\path (4) edge [->] node [below] {$\frac{1}{2}$} (3);
\path (1) edge [->] node [left] {$\frac{1}{2}$} (4);
\path (3) edge [->] node [below left] {1} (1);
\end{tikzpicture}
}}
\hspace{0.5in}
G[V,\q- \x ^*]=
\vcenter{\hbox{
\begin{tikzpicture}[main/.style = {draw, circle}, node distance=2cm] 
\node[main] (1) {$1$}; 
\node[main] (2) [right of=1] {$2$}; 
\node[main] (3) [below of =2] {$3$}; 
\node[main] (4) [below of =1] {$4$}; 
\path (1) edge [->] node [below] {$\frac{1}{2}$} (2);
\path (2) edge [->] node [right]  {$\frac{1}{2}$} (3);
\path (4) edge [->] node [below] {$\frac{1}{2}$} (3);
\path (1) edge [->] node [left] {$\frac{1}{2}$} (4);
\end{tikzpicture}
}}
\]

Note that the graph $G[V,\q- \x ^*]$ is acyclic, and therefore the set of arcs $A(\q- \x ^*)$ forms an acylic graph. That follows since $\x ^*$ is a maximal circulation.  However, note that while the number of arcs in $A(\q-\x')$ is $2$ and the number of arcs in $A(\q-\x'')$ is $2$, the number of arcs in $A(\q-\x ^*)$ is $4$ and hence strictly greater than the number of arcs in the acyclic graph induced by the integer-valued circulations. Following Lemma \ref{lem:strong-max}, $A(\q-\x^*)=A(\q-\x') \cup A(\q-\x'')$. Finally, by adding the transitive closure arcs to $A(\q- \x ^*)$, we get the strong partial order $A^{SP}$.  

\[
A(\q-\x ^*)=
\vcenter{\hbox{
\begin{tikzpicture}[main/.style = {draw, circle}, node distance=2cm] 
\node[main] (1) {$1$}; 
\node[main] (2) [right of=1] {$2$}; 
\node[main] (3) [below of =2] {$3$}; 
\node[main] (4) [below of =1] {$4$}; 
\path (1) edge [->] node [below] {} (2);
\path (2) edge [->] node [right]  {} (3);
\path (4) edge [->] node [below] {} (3);
\path (1) edge [->] node [left] {} (4);
\end{tikzpicture}
}}
\hspace{0.5in}
A^{SP}(\q)=
\vcenter{\hbox{
\begin{tikzpicture}[main/.style = {draw, circle}, node distance=2cm] 
\node[main] (1) {$1$}; 
\node[main] (2) [right of=1] {$2$}; 
\node[main] (3) [below of =2] {$3$}; 
\node[main] (4) [below of =1] {$4$}; 
\path (1) edge [->] node [below] {} (2);
\path (2) edge [->] node [right]  {} (3);
\path (4) edge [->] node [below] {} (3);
\path (1) edge [->] node [left] {} (4);
\path (1) edge [->] node [below left] {} (3);
\end{tikzpicture}
}}
\]

\section{Computing a Strong Circulation with One Maximum Circulation Flow }
\label{sec:certif}

Algorithm 1, described
in Section \ref{sec:strong-max-circ}, computes the acyclic graph $A^*$, which in turn can be used to construct a strong maximum circulation and a strong partial order.
Algorithm 1 requires the solution of at most $m+1$ maximum circulation problems.  In this section, we introduce the \emph{perturbation algorithm}, which determines $A^*$ by solving a single minimum cost network flow (MCNF) problem.  Specifically, the perturbation algorithm returns a strong maximum circulation, which is then used to find $A^*$ by removing the arcs at capacity.

The perturbation algorithm relies on the optimality conditions that are shown to characterize strong maximum circulations---the {\emph{strong complementary slackness conditions}}, as defined in Subsection  \ref{sec:strong}. The formulation of the associated MCNF problem and the perturbation algorithm are subsequently given in Subsection \ref{sec:more-efficient}.

\subsection{Strong (Rather than Strict) Complementary Slackness}
\label{sec:strong}
We define here the ``strong complementary slackness" conditions that are proved to characterize strong maximum circulations. These conditions are more restrictive than the usual complementary slackness conditions, but less restrictive than the conditions that are known as strict complementary slackness.  We will subsequently show that the strong maximum circulation derived from the MCNF solution satisfies strong complementary slackness.  It does not necessarily satisfy strict complementary slackness.

Among all optimal solutions for the maximum circulation problem (\ref{eqn:max-circ}), we want a strong maximum circulation; that is, we want a maximum circulation with as many arcs as possible with flow less than than their capacity.  Suppose that we are given a circulation $\x'$.  One can identify that $\x'$ is maximum using a form of LP duality such as the complementary slackness conditions. It is shown next that one can conclude that $\x'$ is strong if it satisfies the ``strong complementary slackness conditions.''

The dual of the maximum circulation problem (\ref{eqn:max-circ}) is:

\begin{align}
    \label{eq:dual-program}
      \min & \sum_{(i,j)\in A} q_{ij}z_{ij} \\
     \text{subject to } & y_i -y_j +z_{ij} \ge 1 \,\, \forall (i,j)\in A \nonumber\\
     &z_{ij} \ge 0 \,\, \forall (i,j)\in A. \nonumber
\end{align}

Suppose that $\x'$ is a feasible circulation for (1) and that $(\y', \z')$ is feasible for (\ref{eq:dual-program}).  We will express the strong optimality conditions based entirely on the vectors $\x'$ and $\y'$.  We may do so without loss of generality because we assume that for all $(i,j) \in A$, $z'_{ij}=\max \{ y'_j-y'_i +1, 0\}$.  These conditions are satisfied by any optimal solution to (\ref{eq:dual-program}).

Recall that we use the notation $c_{ij}$ to represent the cost per unit flow on arc $(i,j)$ in a MCNF problem (\ref{eqn:MCNF}) and that $c_{ij} = 1$ for all $(i, j) \in A$ for the primal maximum circulation problem (\ref{eqn:max-circ}). We say that $\x'$ and $\y'$ satisfy \emph{strong complementary slackness} for the MCNF problem if the following is true:

\begin{enumerate}
\item  For each $(i,j) \in A$, if $x'_{ij} = 0$, then $c_{ij} - y'_i + y'_j  \le  0$.
\item  For each $(i,j) \in A$, if $0 < x'_{ij} < q_{ij}$, then $c_{ij} - y'_i + y'_j  =  0$.
 \item  For each $(i,j) \in A$, if $x'_{ij} = q_{ij}$, then $c_{ij} - y'_i + y'_j  >  0$.

\end{enumerate}

The above conditions are more restrictive than the standard complementary slackness conditions because of (3).    The usual complementary slackness condition is: ``if $x'_{ij} = q_{ij}$, then $c_{ij} - y'_i + y'_j  \ge  0$''.  The above conditions are also less restrictive than {\em strict} complementary slackness conditions, which require,  in addition to (2) and (3), to replace (1) by:  ``For each $(i,j) \in A$, if $x'_{ij} = 0$, then $c_{ij} - y'_i + y'_j  <  0$". 

If $\x'$ and $\y'$ satisfy strong complementary slackness, then $\y'$  ``certifies'' that $\x'$ is a strong maximum circulation, as stated in the following lemma.

\begin{theorem} \label{lem:comp-slack-strong}
  Suppose that $\x'$ is a feasible circulation for (\ref{eqn:max-circ}), and that $\y'$ is feasible for (\ref{eq:dual-program})---i.e., the dual of (\ref{eqn:max-circ}).  If $\x'$ and $\y'$ satisfy strong complementary slackness, then $\x'$ is a strong maximum circulation. 
\end{theorem}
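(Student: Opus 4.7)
My plan is to argue by contradiction. First, the strong complementary slackness conditions (1)--(3) imply the ordinary complementary slackness conditions for (\ref{eqn:max-circ}): (1) and (2) coincide with the standard conditions, and (3) merely strengthens ``$c_{ij}-y'_i+y'_j\ge 0$'' to strict inequality. Writing $\bar c_{uv}:=c_{uv}-y'_u+y'_v$, this already suffices to conclude that $\x'$ is an optimal solution of (\ref{eqn:max-circ}), i.e., a maximum circulation. Because $A(\q-\x')\subseteq A^*$ always holds for any maximum circulation (any unsaturated arc is strong by definition), Corollary~\ref{lem:strong-max} reduces the task to proving the reverse inclusion $A^*\subseteq A(\q-\x')$.

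Assume for contradiction that some $(i,j)\in A^*$ has $x'_{ij}=q_{ij}$. By the definition of $A^*$ there is a second maximum circulation $\x''$ with $x''_{ij}<q_{ij}$. Consider $\delta:=\x'-\x''$, a conservative net flow with $\delta_{ij}>0$. I will use the standard fact that such a $\delta$ decomposes as $\sum_k\theta_k\chi_{C_k}$, where each $\theta_k>0$ and each $C_k$ is a simple signed cycle with a forward part $C_k^+$ (arcs where $\delta_{uv}>0$) and a backward part $C_k^-$ (arcs where $\delta_{uv}<0$). Every $(u,v)\in C_k^+$ then satisfies $x'_{uv}>x''_{uv}\ge 0$ so $x'_{uv}>0$, while every $(u,v)\in C_k^-$ satisfies $x'_{uv}<x''_{uv}\le q_{uv}$ so $x'_{uv}<q_{uv}$. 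Since $\delta_{ij}>0$, at least one of these cycles, call it $C$, contains $(i,j)$ in $C^+$.

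The crux is to show that each cycle $C_k$ has zero original cost. By the sign properties just established, the perturbation $\x'-\epsilon\chi_{C_k}$ is feasible for every sufficiently small $\epsilon>0$, so optimality of $\x'$ forces $\sum_{C_k^+}c_{uv}-\sum_{C_k^-}c_{uv}\ge 0$ for every $k$. On the other hand $c^\top\delta=c^\top\x'-c^\top\x''=0$ because both circulations are optimal, and $c^\top\delta=\sum_k\theta_k\bigl(\sum_{C_k^+}c_{uv}-\sum_{C_k^-}c_{uv}\bigr)$; with $\theta_k>0$ and each parenthesized summand nonnegative, each must vanish. Because $y'$ telescopes around any closed signed walk, the same identity transfers to reduced costs: $\sum_{C^+}\bar c_{uv}-\sum_{C^-}\bar c_{uv}=0$.

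Finally, I apply strong complementary slackness arc by arc along $C$. Every $(u,v)\in C^+$ has $x'_{uv}>0$, so conditions (2)--(3) give $\bar c_{uv}\ge 0$, with strict positivity whenever $x'_{uv}=q_{uv}$; in particular $\bar c_{ij}>0$. Every $(u,v)\in C^-$ has $x'_{uv}<q_{uv}$, so conditions (1)--(2) give $\bar c_{uv}\le 0$, hence $-\bar c_{uv}\ge 0$. Summing, $\sum_{C^+}\bar c_{uv}-\sum_{C^-}\bar c_{uv}\ge\bar c_{ij}>0$, which contradicts the zero-cost equality from the previous paragraph. The most delicate step is the zero-cost-per-cycle claim: it requires combining a local one-sided inequality coming from optimality of $\x'$ with the global identity $c^\top\x'=c^\top\x''$ to pin down each cycle cost individually, and this is where the feasibility of the perturbation (guaranteed by the sign analysis of $\delta$) does the real work.
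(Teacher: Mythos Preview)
Your proof is correct, but it takes a substantially more laborious route than the paper's argument. Both proofs begin the same way: strong complementary slackness implies ordinary complementary slackness, so $\x'$ and $\y'$ are primal and dual optimal. For the second step, however, the paper exploits a single observation you bypass: since $\y'$ is dual optimal, it satisfies \emph{ordinary} complementary slackness with \emph{every} maximum circulation $\x''$, not just with $\x'$. Thus if $x'_{ij}=q_{ij}$, condition~(3) gives $\bar c_{ij}>0$, and then ordinary complementary slackness of $(\x'',\y')$ immediately forces $x''_{ij}=q_{ij}$. That is the whole argument.

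Your approach instead fixes a particular $\x''$ with $x''_{ij}<q_{ij}$, forms $\delta=\x'-\x''$, conformally decomposes it into signed cycles, uses feasibility of one-sided perturbations together with equality of objective values to pin down each cycle's cost as zero, telescopes to reduced costs, and then reads off a contradiction from the sign pattern dictated by strong complementary slackness. All of this is sound, and the zero-cost-per-cycle step is handled carefully. But the machinery of conformal decomposition and the perturbation argument is, in effect, a hands-on rederivation of the fact that a dual optimum certifies every primal optimum---which is exactly what the paper invokes directly. The paper's proof is two lines; yours is a page. What your argument does buy is self-containment: it never appeals to the abstract LP fact that optimal dual solutions pair with all primal optima, so it might be preferable in a setting where that fact is not assumed.
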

\begin{proof}
    Assume that $\x'$ and $\y'$ satisfy strong complementary slackness. Then they also satisfy complementary slackness, and are thus optimal for the respective primal (\ref{eqn:max-circ}) and dual (\ref{eq:dual-program}) problems. Therefore, $\x'$ must be a maximum circulation and $\y'$ is a dual optimal solution.
    To prove that $\x'$ is a strong maximum circulation, 
    we show next that for any other maximum circulation  $\x''$, if $x'_{ij}=q_{ij}$ for any arc $(i,j) \in A$, then $x''_{ij}=q_{ij}$.

By the strong complementary slackness satisfied by
$\x'$ and $\y'$: if $x'_{ij}=q_{ij}$, then $c_{ij} - y'_i + y'_j  >  0$.  Since $\x''$ is a maximum circulation and $\y'$ is a dual optimal solution, it follows that $\x''$ and $\y'$ satisfy complementary slackness.   Therefore, $c_{ij} - y'_i + y'_j  >  0$, implies that $x''_{ij}=q_{ij}$.   
\end{proof}


Our focus here is on strong complementary slackness, which is a relaxation of strict complementary slackness. In fact, any linear program that has a primal and dual solution will have solutions that satisfy strict complementary slackness, as proved by \cite{goldman1956theory}. 

\begin{theorem} \label{lem:strict_comp_slackness}
  Consider the linear program $\min \{cx : Ax \le b, 0 \le x \le u\}$. If the linear program has a feasible solution and if the dual linear program also has a feasible solution, then there is are optimal solutions $\x^*$ and $\y^*$ for the primal and dual problems such that the pair $(\x^*, \y^*)$ satisfies strict complementary slackness. 
\end{theorem}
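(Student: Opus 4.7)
The plan is to follow the classical Goldman--Tucker construction in two steps: first produce, for each complementary pair, a single-index witness of strict complementarity, then aggregate these witnesses by convex combination.

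By strong LP duality, the joint feasibility of the primal and dual implies that both attain finite optima with common value $z^*$; let $P^*$ and $D^*$ denote the (nonempty, convex) sets of primal and dual optima. For each index $j$, consider the auxiliary linear program of maximizing $x_j$ subject to primal feasibility and to the optimality constraint $c^{\top}\x = z^*$, so that its feasible region is exactly $P^*$. If the optimal value is strictly positive, the maximizer $\x^{(j)}$ lies in $P^*$ and witnesses $x^{(j)}_j > 0$. Otherwise the optimal value is $0$, and strong duality applied to this auxiliary LP yields multipliers that, after folding in the multiplier on $c^{\top}\x = z^*$, produce an optimal dual solution $\y^{(j)} \in D^*$ of the original problem whose reduced cost for $x_j$ is strictly positive. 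Applying the analogous construction on the upper-bound side (minimizing $x_j$ over $P^*$ and checking whether it drops strictly below $u_j$) and on the dual side (maximizing each dual variable $y_i$ over $D^*$) yields, for every primal-dual complementarity pair, either a witness in $P^*$ achieving strict primal interiority or a witness in $D^*$ achieving strict dual positivity (equivalently, strict primal slack).

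The aggregation step is short. Let $\bar{\x}$ be a convex combination with strictly positive weights of all the primal witnesses produced above, and let $\bar{\y}$ be the analogous combination of the dual witnesses. Convexity of $P^*$ and $D^*$ gives $\bar{\x} \in P^*$ and $\bar{\y} \in D^*$, and strict inequalities are preserved under positive-weight averaging. Hence $\bar{\x}_j > 0$ whenever some primal witness achieves this, $\bar{\x}_j < u_j$ whenever achievable, and the analogous statements hold for $\bar{\y}$. For each complementary pair not already resolved on the primal side, the dual certificate constructed in the first step provides strict behavior on the dual side (and symmetrically), so together $(\bar{\x}, \bar{\y})$ satisfies strict complementary slackness.

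The main obstacle is the dualization step. When maximizing $x_j$ over $P^*$ returns $0$, one must write out the dual of the auxiliary LP carefully and translate its multipliers back into a genuine dual optimum of the original LP whose reduced cost on $x_j$ is strictly positive. This is a standard Farkas-style manipulation, but care is required with signs and with the free multiplier attached to the equality constraint $c^{\top}\x = z^*$, and the same manipulation must be carried out uniformly across the lower-bound, upper-bound, and inequality-constraint cases so that every complementary pair is handled consistently.
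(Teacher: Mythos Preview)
The paper does not prove this theorem. It states the result and attributes it to \cite{goldman1956theory}, then immediately uses it as a black box to derive Corollary~\ref{lem:comp-slack-converse}. So there is no ``paper's own proof'' to compare against.

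Your proposal is a correct outline of a standard modern proof of the Goldman--Tucker theorem: produce per-index witnesses via auxiliary LPs over the optimal face, then average. The one place that actually requires work is exactly the place you flag---when maximizing $x_j$ over $P^*$ returns $0$, extracting from the auxiliary dual a genuine optimal dual solution of the \emph{original} LP with strictly positive reduced cost on $x_j$. This step hinges on the multiplier attached to the equality $c^{\top}\x = z^*$ being handled so that the resulting vector is feasible for the original dual (not just for the auxiliary dual), and on strong duality for the auxiliary LP being applicable (it is, since $P^*$ is nonempty and bounded in the $x_j$ direction by $0 \le x_j \le u_j$). If you were to write this out in full you would need to treat the three cases (lower bound, upper bound, inequality constraint) uniformly, as you note; none of them hides a surprise, but the sign bookkeeping is where errors typically creep in. Given that the paper only needs the statement, your level of detail already exceeds what the paper provides.
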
 

Theorem \ref{lem:strict_comp_slackness} can be applied to prove the following.

\begin{corollary}
    \label{lem:comp-slack-converse}
  Suppose that $\x'$ is a strong maximum circulation.  Then there exists a vector $\y'$ that is feasible for (\ref{eq:dual-program}) such that the pair $(\x', \y')$ satisfies strong complementary slackness. 
\end{corollary}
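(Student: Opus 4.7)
The plan is to derive the certificate $\y'$ from Theorem~\ref{lem:strict_comp_slackness} applied to the maximum circulation LP~(\ref{eqn:max-circ}) and its dual~(\ref{eq:dual-program}), and then transfer it from the specific primal optimum produced there to the given strong maximum circulation $\x'$.

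First I would invoke Theorem~\ref{lem:strict_comp_slackness} to obtain a primal--dual optimal pair $(\x^*, \y^*)$ (with $z^*_{ij}=\max\{y^*_j - y^*_i + 1, 0\}$) satisfying strict complementary slackness. Unpacking strict CS for this LP with $c_{ij}=1$ yields the sharpened trichotomy: $x^*_{ij}=0 \iff c_{ij}-y^*_i+y^*_j<0$, $\;0<x^*_{ij}<q_{ij} \iff c_{ij}-y^*_i+y^*_j=0$, and $x^*_{ij}=q_{ij} \iff c_{ij}-y^*_i+y^*_j>0$. In particular, strict CS implies strong CS, so Theorem~\ref{lem:comp-slack-strong} already gives that $\x^*$ is itself a strong maximum circulation.

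The transfer step relies on the uniqueness of the strong subgraph. By Corollary~\ref{lem:strong-max}, since both $\x^*$ and $\x'$ are strong maximum circulations, $A(\q - \x') = A^* = A(\q - \x^*)$, so an arc is at capacity under $\x'$ if and only if it is at capacity under $\x^*$. I then set $\y' := \y^*$ and verify the three strong CS conditions for $(\x',\y^*)$. Condition~(3) is the only one that genuinely needs strict CS: if $x'_{ij}=q_{ij}$, then $x^*_{ij}=q_{ij}$, and strict CS for $(\x^*,\y^*)$ gives $c_{ij}-y^*_i+y^*_j>0$. Conditions~(1) and~(2) follow from standard complementary slackness for the primal-optimal $\x'$ and dual-optimal $\y^*$: the first gives $x'_{ij}=0 \Rightarrow c_{ij}-y^*_i+y^*_j \le 0$, and combining the two standard CS inequalities on an arc with $0<x'_{ij}<q_{ij}$ forces $c_{ij}-y^*_i+y^*_j=0$.

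The main obstacle I anticipate is conceptual rather than computational: one might be tempted to construct $\y'$ directly from $\x'$, but the efficient route is to extract $\y^*$ from a pair chosen by Goldman--Tucker and then argue that the same $\y^*$ certifies every strong maximum circulation. The key leverage point is that the at-capacity arc set is an invariant across all strong maximum circulations (Corollary~\ref{lem:strong-max}), which is exactly what lets a certificate for one strong max circulation serve as a certificate for any other.
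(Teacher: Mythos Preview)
Your proof is correct and follows essentially the same approach as the paper: invoke Goldman--Tucker to obtain a strict-CS pair $(\x^*,\y^*)$, set $\y':=\y^*$, use ordinary complementary slackness of $(\x',\y^*)$ for conditions~(1) and~(2), and use the invariance of the at-capacity arc set to transfer the strict inequality in condition~(3) from $\x^*$ to $\x'$. The only cosmetic difference is that you justify $x'_{ij}=q_{ij}\Rightarrow x^*_{ij}=q_{ij}$ by first arguing (via Theorem~\ref{lem:comp-slack-strong}) that $\x^*$ is itself strong and then invoking Corollary~\ref{lem:strong-max}, whereas the paper goes directly through the definition of $A^*$ (any arc not in $A^*$ is at capacity in \emph{every} maximum circulation, including $\x^*$).
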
 

\begin{proof}
     By \cite{goldman1956theory} there must be a maximum circulation $\x^*$ and a dual feasible solution $\y^*$ such that $(\x^*, \y^*)$ satisfies strict complementary slackness.  Since $\x'$ is also maximum, $(\x', \y^*)$ satisfies complementary slackness. Because $\x'$ is a strong maximum circulation,  $A^* = A(q - \x')$.  Therefore, for all $(i, j) \in A\backslash A^*$, $x'_{ij} = q_{ij}$ and therefore $x^*_{ij} = q_{ij}$.  Let $w^*_{ij} = 1 - y^*_i + y^*_j$. By the strict complementary slackness of $(\x^*, \y^*)$, it follows that $w^*_{ij} > 0$, and thus $(\x', \y^*)$ satisfies strong complementary slackness.  
\end{proof}


Therefore, one can certify a maximum circulation $\x'$ as strong by obtaining a dual solution $\y'$ such that the pair  $(\x', \y')$ satisfies strong complementary slackness.

\subsection {The ``Perturbation" Algorithm}
\label{sec:more-efficient}
We present here the ``Perturbation" algorithm that
finds a strong maximum circulation $\x^*$ as well as the set $A^*$ and the induced strong partial order $A^{SP}$ by solving a single maximum (or minimum) cost flow problem derived from a ``perturbation" of the maximum circulation problem.   That solution also produces a dual vector $\y^*$ such that $\x^*$ and $\y^*$ satisfy strong complementary slackness.

The perturbation algorithm relies on solving the following maximum utility (negative cost) flow problem with two sets of decision variables $\w , \v \in \mathbb{R}^m$.  Recall that $m=|A|=|A(\q)|$, the number of arcs in the vote graph $G=(V,A)$.  Our network flow formulation (\ref{eqn:max_xy}) is defined on a graph $G=(V,A^{(2)})$ where $A^{(2)}$ contains 2 copies of each arc $(i,j) \in A$.  The flow in the first copy of $(i, j)$ will be denoted as $w_{ij}$ with capacity upper bound $q_{ij} - \epsilon$, and the flow in the second copy of $(i,j)$ will be denoted as $v_{ij}$ with capacity upper bound $\epsilon$.  

The perturbation algorithm can be applied for vote graphs with integral or non-integral $\q$. If $q_{ij}$ is integral for all arcs $(i,j) \in A$, then we can choose $\epsilon = 1/(m+1)$.  Otherwise, we can choose $\epsilon = 1/( (m+1)\cdot LCD)$, where LCD is the least common denominator of the values in $\q$.

We let $c_{ij} = 1$ for all arcs $(i,j) \in A$.  

 \begin{align}
    \label{eqn:max_xy}
      \max \,\,\,\,\ & f(\w, \v) =   \sum_{(i,j)\in A} c_{ij} (w_{ij}+v_{ij})
      -\frac{1}{m+1} \sum_{(i,j)\in A} v_{ij}\\
      \text{subject to } & \sum_{j: (i,j) \in A} (w_{ij}+v_{ij})-\sum_{j:  (j,i) \in A}( w_{ji}+v_{ji})=0  \,\,\ \forall i\in V, \nonumber\\     
     & 0\leq w_{ij} \leq q_{ij} - \epsilon \nonumber
     \,\,\, \text{ for }(i,j) \in A, \\
     & 0\leq v_{ij} \leq \epsilon \nonumber
     \,\,\,\, \text{ for }(i,j) \in A. 
\end{align}

\medskip

We note that model \ref{eqn:max_xy} is a maximum cost flow problem. It can be converted to a minimum cost flow problem by replacing the objective function by its negative.

The perturbation algorithm consists of first finding an optimal solution $(\w', \v')$ for problem (\ref{eqn:max_xy}), and then finding an optimal solution $(\y', \z')$ for the dual of (\ref{eqn:max_xy}).  The complexity of finding an optimal solution to MCNF, using the successive approximation algorithm of \cite{goldberg_solving_1987}, is $O(nm \log (n^2/m) \log(nC))$, where $n=|V|$ $m=|A|$ and $C=\max _{(i,j)\in A} |c_{ij}|$.  Here we scale the objective function  
by $m+1$ so all coefficients are integer, and $C=m+1$. Observing that $m=O(n^2)$ the complexity of solving (\ref{eqn:max_xy}) is $O(nm \log(n^2/m)\log n )$ -- the same complexity as that of solving one maximum circulation problem.
To derive the optimal dual solution, the standard procedure is to construct a graph where the arcs are all the residual arcs with respect to the optimal flow, with their respective costs, and to add a source node with arcs of length $0$ to all nodes in the graph.  In this constructed graph the length of the shortest paths from the source node, correspond to the optimal dual values.  Hence, the optimal dual solution $\y'$ can be obtained in $O(mn)$ time, using, e.g., the Bellman-Ford algorithm. This time to determine the optimal dual solution is dominated by the run time required to find the optimal flow.

The perturbation algorithm outputs $\x' = \w' + \v'$.  We show next that $\x'$ is a strong maximum circulation and that $(\x', \y')$ satisfies strong complementary slackness for the maximum circulation problem.

\begin{lemma}
\label{equivalent}
Let $(\w', \v')$ be an optimal basic solution for the maximum cost network flow problem (\ref{eqn:max_xy}) and $(\y',\z')$ optimal for the dual linear program.   Then $\x' = \w' + \v'$ is a strong maximum circulation and $A'= \{(i, j): v'_{ij} = 0\}$ satisfies $A' = A^*$.  
Moreover, $\x'$ and $\y'$ satisfy strong complementary slackness for the maximum circulation problem  (\ref{eqn:max-circ}). 
\end{lemma}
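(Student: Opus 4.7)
The plan is to establish three facts in sequence: (a) $\x' := \w' + \v'$ is a feasible maximum circulation for the original problem; (b) $A' = A^*$; and (c) $(\x', \y')$ satisfies strong complementary slackness for (\ref{eqn:max-circ}). Parts (a) and (b), combined with Corollary \ref{lem:strong-max}, will then yield that $\x'$ is a strong maximum circulation. For (a), feasibility of $\x'$ is immediate from the summed bounds $w'_{ij} + v'_{ij} \le q_{ij}$ and the shared flow-conservation constraints. To show $\sum x'_{ij} = z^*$, I argue by contradiction: if $\sum x'_{ij} < z^*$, then the residual graph of $\x'$ in $G$ contains a directed cycle $C$ with $|C^+| > |C^-|$ (forward versus reverse arcs). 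Lifting a flow push of $\delta > 0$ around $C$ into the perturbation---increasing $w_{ij}$ on each forward arc (or $v_{ij}$ when $w'_{ij}$ already sits at $q_{ij} - \epsilon$) and decreasing $v_{ij}$ on each reverse arc (or $w_{ij}$ when $v'_{ij} = 0$)---changes the perturbation objective by at least $\delta(|C^+|\, m/(m+1) - |C^-|)$ in the worst case, which is strictly positive because $|C^+| \ge |C^-| + 1$ and the cycle uses at most $m$ distinct arcs of $G$, forcing $|C^-| \le m - 1$. This contradicts optimality of $(\w', \v')$.

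For (b), any arc $(i,j) \notin A^*$ must have $x'_{ij} = q_{ij}$ (every maximum circulation does), which forces $v'_{ij} = \epsilon$ and hence $(i,j) \notin A'$, giving $A' \subseteq A^*$. For the reverse inclusion, I lift the strong maximum circulation $\x^*$ of Lemma \ref{lem:strong-partial} to $(\w^*, \v^*)$ via $w^*_{ij} := \min(x^*_{ij}, q_{ij} - \epsilon)$ and $v^*_{ij} := x^*_{ij} - w^*_{ij}$. Since $x^*_{ij} \le q_{ij} - 1/|A^*| < q_{ij} - \epsilon$ on $A^*$ (using $\epsilon = 1/(m+1) < 1/m \le 1/|A^*|$), this lift satisfies $v^*_{ij} = 0$ on $A^*$ and $v^*_{ij} = \epsilon$ off $A^*$, with perturbation objective value $z^* - (m - |A^*|)\epsilon/(m+1)$. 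This lower-bounds the perturbation optimum $\tilde z$, and combined with the identity $\tilde z = z^* - \sum v'_{ij}/(m+1)$ from part (a) it gives $\sum v'_{ij} \le (m - |A^*|)\epsilon$. But $A' \subseteq A^*$ already forces $\sum v'_{ij} \ge (m - |A^*|)\epsilon$ from the arcs outside $A^*$, so equality holds, no $v'_{ij}$ mass lies on $A^*$, and $A^* \subseteq A'$.

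For (c), write $\gamma_{ij} := 1 - y'_i + y'_j$. The perturbation KKT conditions give: (i) $w'_{ij} > 0 \Rightarrow \gamma_{ij} \ge 0$; (ii) $w'_{ij} < q_{ij} - \epsilon \Rightarrow \gamma_{ij} \le 0$; (iii) $v'_{ij} > 0 \Rightarrow \gamma_{ij} \ge 1/(m+1)$; (iv) $v'_{ij} < \epsilon \Rightarrow \gamma_{ij} \le 1/(m+1)$. Condition 1 of strong complementary slackness follows from (ii) because $x'_{ij} = 0$ forces $w'_{ij} = 0 < q_{ij} - \epsilon$; Condition 3 follows from (iii) because $x'_{ij} = q_{ij}$ forces $v'_{ij} = \epsilon$, giving $\gamma_{ij} \ge 1/(m+1) > 0$. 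For Condition 2, part (b) tells us $(i,j) \in A^*$ and $v'_{ij} = 0$ whenever $0 < x'_{ij} < q_{ij}$, so $w'_{ij} = x'_{ij} > 0$ and (i) yields $\gamma_{ij} \ge 0$. The matching bound $\gamma_{ij} \le 0$ is the main obstacle, because (ii) applied to $(\w', \v')$ alone is inconclusive when $w'_{ij}$ sits at $q_{ij} - \epsilon$; the resolution is to observe that the lift $(\w^*, \v^*)$ used in part (b) is itself a perturbation optimum (its objective equals $\tilde z$), so by LP duality the triple $(\w^*, \v^*, \y')$ also satisfies the perturbation KKT conditions. Since $w^*_{ij} = x^*_{ij} < q_{ij} - \epsilon$ on every arc of $A^*$, applying (ii) to $\w^*$ gives $\gamma_{ij} \le 0$, and hence $\gamma_{ij} = 0$, completing the verification.
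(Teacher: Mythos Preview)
Your proof is correct and tracks the paper's argument closely in parts (b) and (c): both proofs lift the averaged strong maximum circulation $\x^*$ to a feasible solution $(\w^*,\v^*)$ of the perturbation, observe that its objective matches the perturbation optimum, and then exploit the complementary slackness of $(\w^*,\v^*)$ with $\y'$ to close the only delicate case (when $w'_{ij}$ sits at its upper bound). The one genuine difference is your argument for (a). The paper establishes $\sum x'_{ij}=z^*$ by invoking that $(\w',\v')$ is a \emph{basic} solution, so every flow is an integer multiple of $\epsilon$ and the inequality $\c\cdot\x'>\eta^*-\epsilon$ forces $\c\cdot\x'=\eta^*$. You bypass this integrality step entirely with a residual-cycle argument: any augmenting simple cycle for the original problem lifts to a strictly improving direction in the perturbation because the worst-case per-arc penalty $1/(m+1)$ cannot overcome the unit gain from $|C^+|>|C^-|$ once $|C^-|\le m-1$. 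This is a cleaner and more general route---your parts (a) and (b) in fact never use the basic-solution hypothesis, so your argument shows the conclusion holds for \emph{any} optimal $(\w',\v')$, not only basic ones. The paper's approach, on the other hand, is shorter once one is willing to invoke the total unimodularity of network matrices.
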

\begin{proof}

We first consider a solution to (\ref{eqn:max_xy}) that is derived from a strong maximum circulation.  Let $\x ^*$ be the strong maximum circulation calculated by averaging the flows from Algorithm 1.  Let
$A^* = A(q - \x^*)$ denote the arcs whose flow are not at capacity in $\x ^*$.  By Lemma \ref{lem:strong-max}, $A^*$ is the set of strong arcs. Let $K = |A^*|$.  Thus $K < m$.  Note that $\x ^*$ is obtained as the average of $K$ basic flows.  Assuming that all capacities are integral, for each arc $(i, j) \in A^*$, $x^*_{ij} \le q_{ij} - \frac{1}{K} < q_{ij} - \epsilon$.

Let $(\w^*, \v^*)$ be derived from $\x ^*$  as follows.  If  $x^*_{ij}< q_{ij} - \epsilon$, then $w^*_{ij} = x^*_{ij}$ and $v^*_{ij} = 0$.  If  $x^*_{ij}= q_{ij}$, then $w^*_{ij} = x^*_{ij} - \epsilon$ and $v^*_{ij} = \epsilon$.   Note that $(\w^*, \v^*)$ is feasible for the maximum cost flow problem (\ref{eqn:max_xy}).
Let $\eta^*$ be the optimal objective for the maximum circulation problem.   Then $f(\w^*, \v^*) = \eta^* - \frac{(m-K) \epsilon}{m+1} > \eta^* - \epsilon  $. 

We next establish the optimality of $\x'$ for  (\ref{eqn:max-circ}).
Note that $f(\w', \v') \ge f(\w^*, \v^*) > \eta^* - \epsilon$.  Therefore, $\c \cdot \x' > \eta^* - \epsilon$.   Since $(\w', \v')$ is a basic solution, it follows that all flows are integral multiples of $1/\epsilon$, and thus $\c \cdot \x' \ge \eta^* $, implying that $\x'$ is optimal for the maximum circulation problem.

We next show that $A' = A^*$, and thus, according to Corollary \ref{lem:strong-max}, $\x'$ is a strong maximum circulation.  Since $f(\w', \v') = \eta^* - \frac{(m-|A'|)\epsilon}{m+1} \ge 
f(\w^*, \v^*) = \eta^* - \frac{(m - |A^*|)\epsilon}{m+1}$, it follows that $|A'| \ge |A^*|$.  And because $\x'$ is a maximum circulation and $\x^*$ is a strong maximum circulation, it follows that $A' \subseteq A^*$.  We conclude that $A' = A^*$, and therefore $\x'$ is a strong maximum circulation.

Because $A' = A^*$, it follows that $f(\w^*, \v^*) = f(\w', \v')$.  Therefore, $(\w^*, \v^*)$ is optimal for (\ref{eqn:max_xy}).

We next show that $\x^*$ and $\y'$ satisfy strong complementary slackness conditions for (\ref{eqn:max-circ}).  We see why as follows.  Because $(\w^*, \v^*)$ is optimal for (\ref{eqn:max_xy}),  $(\w^*, \v^*)$ and $\y'$ satisfy complementary slackness conditions for (\ref{eqn:max_xy}).  Therefore,  the following is true: 
 
\begin{enumerate}
\item  For each $(i,j) \in A$, if $x^*_{ij} = 0$, then $w^*_{ij} = 0$, and $c_{ij} - y'_i + y'_j  \le  0$.
\item  For each $(i,j) \in A$, if $0 < x^*_{ij} < q_{ij}$, then $0 < w^*_{ij} < q_{ij} - \epsilon$, and  $c_{ij} - y'_i + y'_j  =  0$.
\item  For each $(i,j) \in A$, if $x^*_{ij} = q_{ij}$, then $v^*_{ij} = \epsilon$, and $(c_{ij} - \frac{1}{m+1}) - y'_i + y'_j  \ge  0$, which implies that $c_{ij} - y'_i + y'_j  >  0$.
 \end{enumerate} 
 
Thus, $\x^*$ and $\y'$ satisfy strong complementary slackness.

Finally, we show that $\x'$ and $\y'$ satisfy strong complementary slackness, which will complete the proof of this lemma:  Clearly, $\x'$ and $\y'$ satisfy the first and third conditions of strong complementary slackness.   As for the second condition, 

$$ \text{If } 0 < x'_{ij} < q_{ij}, \text{ then } 0 < x^*_{ij} < q_{ij} - \epsilon, \text{ and } c_{ij} - y'_i + y'_j  =  0.$$
\end{proof}

\section{Dual Optimal Rankings and the Hochbaum-Levin Model}\label{sec:HL}

We now discuss partial orders that are consistent with the dual linear program of the maximum circulation problem. We begin by introducing the separation-deviation model of \cite{hochbaum_methodologies_2006}, which links the Kemeny model to maximum cycle-removal methods, and we show that the convex relaxation of Kemeny's model is the dual of the maximum circulation problem (but not the strong maximum circulation problem).

The Hochbaum and Levin model, henceforth called the HL model, determines a score for each alternative so that the vector of scores minimizes a loss function. The decision variables for the HL model are the components of the vector $\y$, where $y(i)$ is the score of alternative $i$; the higher the score, the more preferred the alternative. The variables
$y(i)$ can take (continuous) values in an interval, or they can be restricted to take integer values in a given range. From an optimal solution $\y'$, a partial order of alternatives can be derived in which $i \succ j$ if $y'(i) > y'(j)$.

The voting system in the HL model is more general than the preference voting system analyzed to this point in the paper. As such, we refer to voters in this section as {\em reviewers}. The HL models permits each reviewer $r$ to report a numeric score for each alternative $i$, denoted $s^r_i$, where higher scores represent more preferred alternatives. They may also report pairwise preference values that express the intensity of the  preference for one alternative over another, denoted $p^r_{ij}$. For example, $p^r_{ij}$ could equal the difference in ranks among pairs of alternatives in a reviewer's ranked list.

For each reviewer and alternative, there is a loss $g_i^r(y(i) - s^r_i)$, which is a function of the {\em deviation} of the solution $\y'$ from the score that reviewer $r$ provides for alternative $i$.  For each reviewer $r$ and for each pair $i,j$ of alternatives, there is also a loss {\em separation} function $f^r_{i,j}(p^r_{ij} - (y'(i)- y'(j)))$, which is a function of the solution's pairwise difference in scores as compared to the intensity of the preference of reviewer $r$ for this pair $i,j$. Both the separation and deviation functions take the value $0$ for zero valued arguments. That is, if the reviewer's score or pairwise preference value is identical to the solution's, then the loss penalty is zero.

We let $V$ be the set of alternatives, and $A$ the set of pairs $(i,j)$ where at least one reviewer expresses a preference of $i$ over $j$. Let $R_{ij}$ be the set of reviewers that express a preference of $i$ over $j$. Let $R_i$ be the set of reviewers that assigned a score to alternative $i$. The HL model is to minimize the aggregate loss function which equals the total deviation and separation loss for all voters:
$$({\rm HL})\ \ \ \min Loss(\y)=\sum _{i\in V} \sum _{r\in R_i}g_i^r(y(i)-s_i^r) + \sum _{(i,j)\in A} \sum _{r\in R_{ij}} f_{ij}^r(y(i)-y(j)-p^r_{ij}).$$

Hochbaum and Levin showed how to solve the optimization problem (HL) in polynomial time when the functions $g_i^r()$ and $f_{ij}^r()$  are convex. They showed how to solve this convex problem in polynomial time for integer variables (and also for continuous variables). This problem was shown to be a special case of the {\em convex dual of minimum cost network flow}, which is solved by the algorithm of \cite{AHO03} in $O(nm\log {\frac{n^2}{m}}\log nU )$, where $n$ is the number of alternatives $|V|$, $m$ is the number of different $f_{ij}$ functions, and $U$ is the range of the variables. For example, if the components of $\y$ are constrained to fall in the range $[-n,n]$, then $U = 2n$.  When the functions are non-convex, then the problem is NP-hard, see, e.g., \cite{GT2000}.  


We next observe that Kemeny's model can be represented as a special case of the HL model applied to a preference voting system in which $R_i = \emptyset$ for all $i\in V$, and where $p^r_{ij} = 1$ for all $r \in R_{ij} $. That is, whenever reviewer $r$ prefers alternative $i$ to alternative $j$, it is associated with an intensity of $1$ unit.

Specifically, let $\delta^+(z)$ denote the function that takes the value $1$ when $z >0$ and zero otherwise. With this notation, Kemeny's model can be expressed as a special case of the HL model with a ``0-1'' loss function:

$$(\text{KEM}) \,\,\,\,\, \min  \sum _{(i,j)\in A}  q_{ij}\delta^+(y(j)-y(i)+1).$$

This objective function is non-convex, and the problem is NP-hard, as is Kemeny's model. Taking a convex relaxation of the objective of KEM via a hinge loss function, one obtains the resulting optimization problem: 

$$(\text{Relax-KEM}) \,\,\,\,\, \min  \sum _{(i,j)\in A}  q_{ij}\max\{(y(j)-y(i)+1), 0\}.$$
 
 \cite{gupte_finding_2011} created a model equivalent to Relax-KEM. where they referred to the loss function as the ``agony'' loss function.  They also showed that Relax-KEM is the dual of the problem that we refer to as the maximum circulation problem. This result is a special case of the fact that convex HL and the convex separation-deviation model is a convex cost dual network flow problem 
 \citep{hochbaum_methodologies_2006,hochbaum2001MRF}.  We provide their result as well as a more concise proof next.

 \begin{lemma}
 \label{lem:DualKEM}
 (Relax-KEM) is the equivalent to the dual of the maximum circulation problem (\ref{eqn:max-circ}).    
\end{lemma}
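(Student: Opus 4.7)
The plan is to start directly from the dual linear program of the maximum circulation problem already displayed in \eqref{eq:dual-program}, and to eliminate the $\z$ variables by solving the inner minimization in closed form. This reduces the dual to an unconstrained optimization over $\y$ whose objective is exactly that of Relax-KEM.

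More concretely, first I would fix any feasible vector $\y$ and consider the dual problem as a minimization in $\z$ only. Since $q_{ij} \ge 0$ for every $(i,j) \in A$, the dual objective $\sum_{(i,j)\in A} q_{ij} z_{ij}$ is monotonically nondecreasing in each coordinate of $\z$, and the variable $z_{ij}$ appears only in the single constraint $z_{ij} \ge 1 - y_i + y_j$ together with the sign constraint $z_{ij} \ge 0$. Hence the pointwise-optimal choice is
\[
z_{ij}^\star(\y) \;=\; \max\{\,1 - y_i + y_j,\; 0\,\} \;=\; \max\{\,y(j) - y(i) + 1,\; 0\,\}.
\]

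Substituting $z_{ij}^\star(\y)$ back into the dual objective gives
\[
\min_{\y} \sum_{(i,j)\in A} q_{ij}\, \max\{\, y(j) - y(i) + 1,\; 0\,\},
\]
which is precisely the formulation (Relax-KEM). Conversely, given any $\y$, setting $z_{ij} = \max\{y(j) - y(i) + 1, 0\}$ produces a feasible pair $(\y,\z)$ for \eqref{eq:dual-program} achieving the same objective value, so the two problems share the same optimal value and their optima are in bijection via this substitution. This establishes the equivalence.

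I do not anticipate a genuine obstacle here: the argument is just the standard ``project out a block of dual variables that appear with nonnegative coefficients in one-sided constraints'' trick, and the dual of \eqref{eqn:max-circ} is already written out for us as \eqref{eq:dual-program}. The only thing to be careful about is to note explicitly that the nonnegativity of $q_{ij}$ (which holds because capacities are nonnegative) is what makes the inner minimization over $\z$ achieve its optimum at the lower envelope, giving the hinge form.
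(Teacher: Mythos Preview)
Your proposal is correct and follows essentially the same approach as the paper: eliminate the $\z$ variables from the dual \eqref{eq:dual-program} by observing that, since $q_{ij}\ge 0$, the optimal choice is $z_{ij}=\max\{y_j-y_i+1,0\}$, which yields exactly the Relax-KEM objective. Your version is in fact more carefully argued than the paper's terse proof, which simply recalls (without justification) that any optimal $(\y,\z)$ satisfies this identity.
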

\begin{proof}   
The dual to the maximum circulation problem is LP (\ref{eq:dual-program}). Recall that any optimal solution $(\y, \z)$ to \ref{eq:dual-program} satisfies the following:   $z_{ij}=\max \{ y_j-y_i +1, 0\}$. Therefore, if $(\y, \z)$ is optimal for (\ref{eq:dual-program}), then
$y(i) = y_i$ minimizes (Relax-KEM).
 \end{proof}

Consequently, the maximum circulation problem (\ref{eqn:max-circ}) can be also viewed as the dual of a convex relaxation of the ``0-1'' loss minimization formulation of Kemeny's method's (KEM).

\section{Minimum Maximal Circulations} \label{sec:minmax}

We next explore the possibility of removing a maximal circulation that has a minimum number of votes. We refer to this problem as the {\em minimum maximal circulation problem}. 
Equivalently, it can be expressed as the problem of finding a circulation vector $\x$ that minimizes  $\1 \cdot \x$ such that $A(\q-\x)$ is acyclic.  The problem for Kemeny's method is a relaxation of the minimum maximal circulation problem in which $\x$ is not required to be a circulation.   That is, the goal of Kemeny's method is to identify a vector $\x$ that minimizes  $\1 \x$ such that $A(\q-\x)$ is acyclic.

While the minimum maximal circulation problem does have the desirable property of having as many votes as possible included in the acyclic graph, it shares the following two drawbacks with Kemeny's method. 
\begin{enumerate}
    \item {\bf Non-uniqueness}: There may be more than one minimum maximal circulation. Further, these minimum maximal circulations may induce conflicting partial orders. That is, there may be minimum maximal circulations $\x'$ and $\x''$ and a pair of alternatives $i$ and $j$ such that $(i,j) \in A^{TC}[V,\q - \x'] $ and $(j, i)\in A^{TC}[V,\q - \x'']$, as shown in Subsection \ref{sec:minimax-equitable}. 
    \item {\bf NP-hardness}: Finding a minimum maximal circulation is NP-hard, as shown in Subsection \ref{sec:minmaxNPhard}.
\end{enumerate}

\subsection{Minimum Maximal Circulations Can Induce Conflicting Partial Orders}\label{sec:minimax-equitable}

In this subsection, we show that there may be two different minimum maximal circulations that lead to conflicting partial orders. That is, alternatives receive more or less favorable treatment depending on which set of votes is eliminated.

Consider the vote graph $G = (V, A)$, illustrated in the top panel of \autoref{fig:minmax}.  Here $V = \{1, 2, ..., 8\}$, and $A = \{(1, 2), (2, 3), (2,5), $ $(3, 4), (4, 1), (4, 7), (5, 6), (6, 1), (6, 5), (7, 8), (8, 3), (8, 7)\}$ where each arc in $A$ has a capacity of 1.

\begin{figure}[t]
\[G[V,\q]=
\vcenter{\hbox{
\begin{tikzpicture}[main/.style = {draw, circle}, node distance=1cm, scale=0.8, transform shape]  
\node[main] (1) {$1$}; 
\node[main] (2) [right= 0.9cm of 1] {$2$};
\node[main] (3)[below right=.9cm and 0.9cm of 2] {$3$}; 
\node[main] (4)[below= 0.9cm of 3] {$4$}; 
\node[main] (5)[below left=.9cm and 0.9cm of 4] {$5$};
\node[main] (6)[left=.9cm of 5] {$6$};
\node[main] (8)[below left=.9cm and 0.9cm of 1] {$8$};
\node[main] (7)[below= 0.9cm of 8] {$7$}; 
\path[bend left=25] (1) edge [->]  (2);
\path[bend left=25] (2) edge [->]  (3);
\path[bend left=25] (3) edge [->]  (4);
\path[bend left=25] (2) edge [->]  (5);
\path[bend left=25] (4) edge [->]  (1);
\path[bend left=25] (8) edge [->]  (7);
\path[bend left=25] (7) edge [->]  (8);
\path[bend left=25] (5) edge [->]  (6);
\path[bend left=25] (6) edge [->]  (5);
\path[bend left=25] (6) edge [->]  (1);
\path[bend left=25] (4) edge [->]  (7);
\path[bend left=25] (8) edge [->]  (3);
\end{tikzpicture}
}}\]

\[
G[V,\x']=
\vcenter{\hbox{
\begin{tikzpicture}[main/.style = {draw, circle}, node distance=1cm, scale=0.8, transform shape]   
\node[main] (1) {$1$}; 
\node[main] (2) [right= 0.9cm of 1] {$2$};
\node[main] (3)[below right=.9cm and 0.9cm of 2] {$3$}; 
\node[main] (4)[below= 0.9cm of 3] {$4$}; 
\node[main] (5)[below left=.9cm and 0.9cm of 4] {$5$};
\node[main] (6)[left=.9cm of 5] {$6$};
\node[main] (8)[below left=.9cm and 0.9cm of 1] {$8$};
\node[main] (7)[below= 0.9cm of 8] {$7$}; 
\path[bend left=25] (3) edge [->]  (4);
\path[bend left=25] (7) edge [->]  (8);
\path[bend left=25] (5) edge [->]  (6);
\path[bend left=25] (6) edge [->]  (5);
\path[bend left=25] (4) edge [->]  (7);
\path[bend left=25] (8) edge [->]  (3);
\end{tikzpicture}
}}
\hspace{0.25in}
G[V,\q-\x']=
\vcenter{\hbox{
\begin{tikzpicture}[main/.style = {draw, circle}, node distance=1cm, scale=0.8, transform shape]   
\node[main] (1) {$1$}; 
\node[main] (2) [right= 0.9cm of 1] {$2$};
\node[main] (3)[below right=.9cm and 0.9cm of 2] {$3$}; 
\node[main] (4)[below= 0.9cm of 3] {$4$}; 
\node[main] (5)[below left=.9cm and 0.9cm of 4] {$5$};
\node[main] (6)[left=.9cm of 5] {$6$};
\node[main] (8)[below left=.9cm and 0.9cm of 1] {$8$};
\node[main] (7)[below= 0.9cm of 8] {$7$}; 
\path[bend left=25] (1) edge [->]  (2);
\path[bend left=25] (2) edge [->]  (3);
\path[bend left=25] (2) edge [->]  (5);
\path[bend left=25] (4) edge [->]  (1);
\path[bend left=25] (8) edge [->]  (7);
\path[bend left=25] (6) edge [->]  (1);
\end{tikzpicture}
}}
\]

\[
G[V,\x'']=
\vcenter{\hbox{
\begin{tikzpicture}[main/.style = {draw, circle}, node distance=1cm, scale=0.8, transform shape]   
\node[main] (1) {$1$}; 
\node[main] (2) [right= 0.9cm of 1] {$2$};
\node[main] (3)[below right=.9cm and 0.9cm of 2] {$3$}; 
\node[main] (4)[below= 0.9cm of 3] {$4$}; 
\node[main] (5)[below left=.9cm and 0.9cm of 4] {$5$};
\node[main] (6)[left=.9cm of 5] {$6$};
\node[main] (8)[below left=.9cm and 0.9cm of 1] {$8$};
\node[main] (7)[below= 0.9cm of 8] {$7$}; 
\path[bend left=25] (1) edge [->]  (2);
\path[bend left=25] (2) edge [->]  (5);
\path[bend left=25] (8) edge [->]  (7);
\path[bend left=25] (7) edge [->]  (8);
\path[bend left=25] (5) edge [->]  (6);
\path[bend left=25] (6) edge [->]  (1);
\end{tikzpicture}
}}
\hspace{0.25in}
G[V,\q-\x'']=
\vcenter{\hbox{
\begin{tikzpicture}[main/.style = {draw, circle}, node distance=1cm, scale=0.8, transform shape]   
\node[main] (1) {$1$}; 
\node[main] (2) [right= 0.9cm of 1] {$2$};
\node[main] (3)[below right=.9cm and 0.9cm of 2] {$3$}; 
\node[main] (4)[below= 0.9cm of 3] {$4$}; 
\node[main] (5)[below left=.9cm and 0.9cm of 4] {$5$};
\node[main] (6)[left=.9cm of 5] {$6$};
\node[main] (8)[below left=.9cm and 0.9cm of 1] {$8$};
\node[main] (7)[below= 0.9cm of 8] {$7$}; 
\path[bend left=25] (2) edge [->]  (3);
\path[bend left=25] (3) edge [->]  (4);
\path[bend left=25] (4) edge [->]  (1);
\path[bend left=25] (6) edge [->]  (5);
\path[bend left=25] (4) edge [->]  (7);
\path[bend left=25] (8) edge [->]  (3);
\end{tikzpicture}
}}
\]
    \caption{Example of a vote graph where minimum maximal circulations induce conflicting partial orders.}
\label{fig:minmax}

\end{figure}
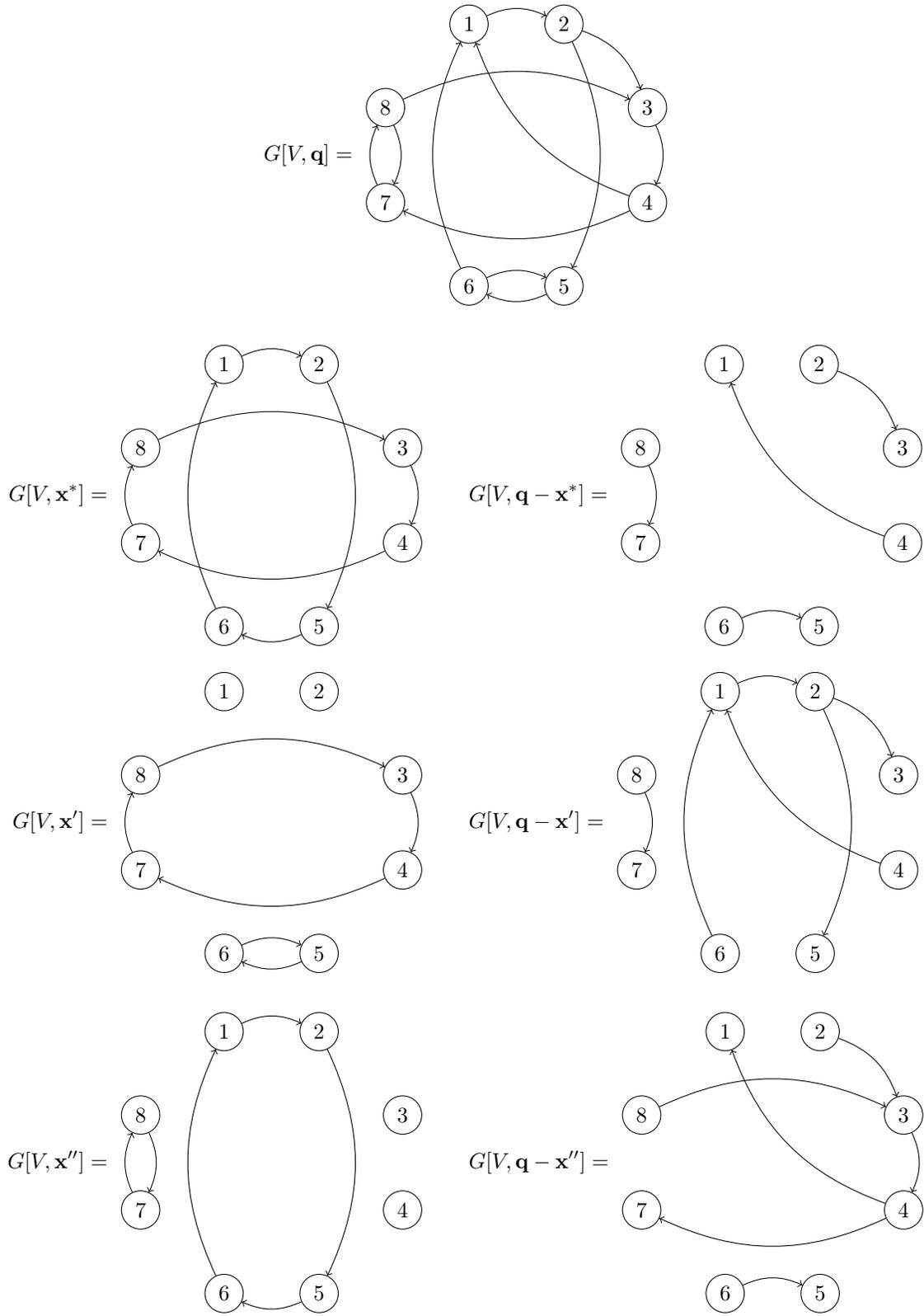

There are two minimum maximal circulations denoted $\x'$ and $\x''$ in the second and third rows of the figure, which contain 6 votes each. $\x'$ consists of directed cycles 3-4-7-8-3 and 5-6-5. $\x''$ consists of directed cycles 1-2-5-6-1 and 7-8-7. After eliminating $\x'$, there remains a directed path from node 1 to node 3.  Therefore, 1 is preferred to 3 in $A^{TC}(\q - \x')$.  After eliminating $\x''$, there remains a directed path from node 3 to node 1. Therefore, 3 is preferred to 1 in $A^{TC}(\q - \x'')$. Thus, the two partial orders conflict.

A strong maximum circulation $\x^*$ can be found by taking a convex combination of two maximum circulations.  The first is a unit flow around cycles 1-2-5-6-1 and 3-4-7-8-3. The second consists of unit flows around cycles 1-2-3-4-1, 5-6-5, and 7-8-7.  $A^{TC}(\q - \x^*)=\{(2,3),(2,5),(4,1),(4,7),(6,1),(6,5),(8,3),(8,7)\}$.  Note that in $A^{TC}(\q - \x^*)$, elements 1, 3, 5, and 7 do not dominate any other elements, and are all unrelated. 

\subsection{Computational Complexity of Finding a Minimum Maximal Circulation}\label{sec:minmaxNPhard}

We next demonstrate the NP-hardness of finding a minimum maximal circulation, which we refer to as the \emph{minmax circulation problem}.  To prove that the problem is NP-complete we formalize this problem as a {\em decision problem}.  If a decision problem is NP-complete, then the respective optimization version is said to be NP-hard.

Recall that the minimum FAS problem, which we restate here as a decision problem, is NP-complete.  As a result, finding a minimum FAS is NP-hard \citep{garyjohnson}.\\

\noindent
{\bf Minimum Feedback Arc Set} $(G,K)$\\
INSTANCE:   A directed graph $G = (V, A)$ and integer $K$.\\
QUESTION:   Is there a subset of at most $K$ arcs whose removal from $G$ leaves an acyclic graph?\\

We next state the minmax circulation problem as a decision problem and prove that it is also NP-complete:\\

\noindent
{\bf The Minmax Circulation Problem} $(G,K)$\\
INSTANCE:   A directed multi-graph $G = (V, A)$ and integer $K$.\\
QUESTION:   Is there a circulation with at most $K$ arcs whose removal from $G$ leaves an acyclic graph?\\

\begin{theorem}   The Minmax Circulation Problem is NP-complete.
\end{theorem}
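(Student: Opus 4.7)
My plan is to prove the Minmax Circulation Problem is NP-complete by showing (i) it lies in NP, and (ii) it is NP-hard via reduction from the Minimum Feedback Arc Set problem (FAS), whose NP-completeness has just been recalled.

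For (i), membership in NP is straightforward. Given a candidate circulation $\x$, one verifies in polynomial time that $0 \le x_{ij} \le q_{ij}$ for every arc, that flow conservation $\sum_{j:(i,j)\in A} x_{ij} = \sum_{j:(j,i)\in A} x_{ji}$ holds at every vertex $i$, that $\sum_{(i,j)\in A} x_{ij} \le K$, and that the residual graph $A(\q-\x)$ has no directed cycle (via topological sort). Each of these checks is polynomial.

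For (ii), the reduction from FAS. Given an FAS instance $(G,K)$ with $G=(V,A)$, the idea is to attach to each arc $a=(u,v)\in A$ a dedicated \emph{return gadget} consisting of new vertices and arcs that, together with $a$, form a short cycle around which one unit of flow can be sent. Sending this unit of flow saturates $a$ (removing it from the residual graph) and contributes only a small, fixed number of additional units of flow. Original and gadget arcs alike are given capacity $1$. The target value $K'$ is chosen as a linear function of $K$ plus an additive overhead that accounts for the cost of breaking those gadget cycles whose associated arcs are not included in the FAS. With this construction, an FAS of size at most $K$ in $G$ extends naturally to a circulation of value at most $K'$ in $G'$ by sending flow around each gadget cycle in a balanced fashion---around the gadget-plus-arc cycle for arcs in the FAS, and around the gadget alone (using the original arc as part of a balancing path) for arcs not in the FAS.

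The main obstacle is the reverse direction. A minimum maximal circulation in $G'$ may route flow in complicated ways that mix gadget flow with flow along paths of $G$ itself, so it is not immediate that the set of saturated original arcs in an optimal circulation is a FAS of $G$ of size at most $K$. Handling this requires an exchange argument: any circulation whose saturated original arcs fail to form a FAS leaves some cycle of $G$ intact, contradicting acyclicity of $A(\q-\x)$; and any circulation that saturates more than $K$ original arcs can be locally modified by rerouting flow through gadgets to strictly decrease its value. The delicacy is in choosing the gadget structure and $K'$ so that these two counts match exactly, so that the optimum in $G'$ corresponds precisely to a minimum FAS of $G$.
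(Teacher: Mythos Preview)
Your plan has the right skeleton---NP membership plus a reduction from Feedback Arc Set via per-arc gadgets---but as written it is only a sketch, and the gap is real. You never specify the gadget, never write down $K'$, and for the reverse direction you only promise ``an exchange argument.'' The most natural reading of your ``return gadget \ldots\ that, together with $a$, form[s] a short cycle'' is a bare return path from $v$ back to $u$ through fresh vertices, and that version fails outright: it creates a new cycle in $G'$ for \emph{every} arc of $G$, and since the fresh gadget vertices lie on no other cycle, the only circulation that can break such a gadget cycle is the one that saturates $a$ itself. You would then be forced to saturate every original arc to achieve acyclicity, and the correspondence with FAS collapses. Your phrase ``around the gadget alone \ldots\ for arcs not in the FAS'' shows you sense that the gadget must admit a second, cheaper cycle that breaks it without touching $a$, but you never supply one, and until you do there is nothing to verify. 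Note too that once a workable gadget is fixed, the reverse direction is not the exchange argument you describe: unsaturating an original arc by ``rerouting through the gadget'' typically reinstates a $G$-cycle in the residual, so local modifications do not preserve feasibility; what is actually needed is a direct count forced by flow conservation at the gadget nodes.

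The paper's construction differs from yours in one decisive structural idea. Instead of independent per-arc return paths, it subdivides each arc $(i,j)$ into a path $i\to ij_1\to ij_2\to j$, introduces two global nodes $s,t$ with arcs $(s,ij_1)$ and $(ij_2,t)$ for every $(i,j)$, and adds exactly $K$ parallel copies of $(t,s)$; the target is $K'=4K$. The only length-$4$ simple cycles in $G'$ are then $C[i,j]=s\to ij_1\to ij_2\to t\to s$, and every simple cycle has length at least $4$. A maximal circulation of value at most $4K$ that makes the residual acyclic must exhaust all $K$ copies of the bottleneck arc $(t,s)$ and hence decompose into exactly $K$ of these $4$-cycles; each $C[i,j]$ saturates the middle arc $(ij_1,ij_2)$, which severs the subdivided path and plays the role of deleting the original arc $(i,j)$. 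The global capacity-$K$ bottleneck at $(t,s)$ is precisely the missing ingredient that turns the reverse direction into a short counting argument rather than the delicate exchange you anticipate.
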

\begin{proof}  
It is clearly in the class NP. We will carry out a transformation from the Feedback Arc Set Problem.  (This was also the starting point in \citep{bartholdi_voting_1989} for  establishing that a decision version of Kemeny's method was NP-complete.)  Let $G = (V, A)$ and $K$ be an instance of the Feedback Arc Set Problem.   We create a graph $G' = (V', A')$ for the Minmax Circulation Problem as follows. 

\begin{enumerate}
\item  $V' = V\cup \{s, t\} \cup V^*$, where $ V^*$ is defined as follows. For every arc $(i, j)\in A$, we replace $(i, j)$ by a path $P[i, j]$ of length $3$.  The first and last nodes of $P[i, j]$ are $i$ and $j$.  The second and third nodes are additional nodes added to $V^*$ that we label $ij_1$ and $ij_2$.
\item $A'$ is defined as follows:  For each arc $(i, j)\in A$, $A'$ includes the arcs of the path
$P[i, j] = (i, ij_1), (ij_1, ij_2), (ij_2, j)$.  In addition, $A'$ includes an arc $(s, ij_1)$ and an arc $(ij_2, t)$.  ($A'$ does not include the arc $(i,j)$.)
\item $A'$ also includes $K$ copies of arc $(t, s)$.  ($A'$ contains no other arc.)
\item $K' = 4K$.
\end{enumerate}

Note that we have introduced a number of cycles of length four in the graph $G'$.  For each $(i, j)\in A$, let $C[i, j] = s$-$ij_1$-$ij_2$-$t$-$s$.  Thus $C[i, j]$ has four arcs.   There is no cycle in $A'$ with fewer than four arcs.

We claim that there is a feasible solution for the minmax circulation problem $(G', K')$ if and only if there is a feasible solution for the feedback arc set problem instance $(G, K)$.
Suppose first that there is a subset $S \subseteq A$ of $K$ arcs whose deletion from $G$ results in an acyclic graph.
Let $S'\subseteq A'$ be obtained as follows:   For each $(i, j) \in A$, if $(i, j) \in S$, then $S'$ contains the cycle $C[i, j]$.  Accordingly, $S'$ contains $K$ cycles, each with four arcs.  The graph with arc set $A'\setminus S'$ is acyclic.

Suppose conversely that $S' \subseteq A'$ has at most $K'$ arcs and suppose that $A'\setminus S'$ is acyclic.  $S'$ must contain all $K$ copies of $(t, s)$.  Moreover, each simple cycle in $A'$ contains at least $4$ arcs.  Since $K' = 4K$, each of the cycles in $S'$ contain exactly four arcs.
The only cycles in $A'$ with four arcs are the sets $C[i, j]$ for $(i, j) \in A$.  Let $S = \{(i, j) : C[i, j] \subset S'\}$.   Then $A\setminus S$ is acyclic.		
\end{proof}

\cite{guruswami_beating_2011} proved that if the Unique Games Conjecture is true, then it is hard to approximate the minimum feedback arc set to within any constant factor in polynomial time.  
Because of the nature of our reduction (factor $4$, constant, used in the reduction), we have the following corollary:

\begin{corollary}   If the Unique Games Conjecture is true, then it is hard to approximate the minimum maximal circulation to within any constant factor in polynomial time.  
\end{corollary}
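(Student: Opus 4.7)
The plan is to argue that the reduction from minimum feedback arc set (FAS) used in the preceding theorem is approximation-preserving up to the constant factor $4$, so that the UGC-based inapproximability of FAS from \cite{guruswami_beating_2011} transfers directly to the minimum maximal circulation problem.

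The core observation, extending the proof of the previous theorem, is a two-sided correspondence between FAS solutions in $G$ and valid minmax circulations in $G'_K$. For the \emph{forward direction}, any FAS $F \subseteq A$ with $|F| \leq K$ lifts to the circulation $S' = \bigcup_{(i,j) \in F} C[i,j]$, a union of $|F|$ arc-disjoint $4$-cycles of total size $4|F| \leq 4K$ whose removal leaves $A' \setminus S'$ acyclic (exactly as in the forward direction of the theorem). For the \emph{backward direction}, I would introduce, for any circulation $S'$ whose removal leaves $A' \setminus S'$ acyclic, the extracted edge set
\[
F(S') = \{(i,j) \in A : S' \cap P[i,j] \neq \emptyset\}.
\]
Arc-disjointness of the subpaths $\{P[i,j]\}_{(i,j) \in A}$ gives $|F(S')| \leq |S'|$, since distinct edges of $F(S')$ contribute distinct arcs to $S'$. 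Moreover, $F(S')$ is a FAS: if $G - F(S')$ contained a directed cycle $v_0 \to v_1 \to \cdots \to v_{k-1} \to v_0$, then each $P[v_l, v_{(l+1) \bmod k}]$ would lie entirely in $A' \setminus S'$, and concatenating them along the cycle would produce a directed cycle in $A' \setminus S'$, contradicting acyclicity. Combining the two directions yields $\operatorname{OPT}_{\text{FAS}}(G) \leq \operatorname{OPT}_{\text{minmax}}(G'_K) \leq 4K$ whenever $\operatorname{OPT}_{\text{FAS}}(G) \leq K$.

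Suppose for contradiction that a polynomial-time $\beta$-approximation $\mathcal{A}$ exists for the minmax circulation problem, for some constant $\beta$. Then for the FAS promise problem of distinguishing $\operatorname{OPT}_{\text{FAS}}(G) \leq K$ from $\operatorname{OPT}_{\text{FAS}}(G) > 4\beta K$, I would construct $G'_K$ and run $\mathcal{A}$ on it to obtain a circulation $S'$. In the YES case, the forward direction gives $|S'| \leq \beta \cdot 4K = 4\beta K$; in the NO case, the backward direction gives $|S'| \geq \operatorname{OPT}_{\text{minmax}}(G'_K) \geq \operatorname{OPT}_{\text{FAS}}(G) > 4\beta K$. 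Thresholding $|S'|$ at $4\beta K$ therefore separates the two cases in polynomial time, contradicting the UGC-based hardness of constant-gap FAS from \cite{guruswami_beating_2011}. Hence no constant $\beta$ can exist, proving the corollary. The main obstacle is establishing the backward direction, i.e., verifying that the extraction $S' \mapsto F(S')$ produces a valid FAS of size at most $|S'|$; this rests crucially on the arc-disjointness of the gadget subpaths $P[i,j]$ and on the cycle-lift argument above.
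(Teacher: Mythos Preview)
Your approach is exactly what the paper has in mind: it merely remarks that ``because of the nature of our reduction (factor $4$, constant, used in the reduction)'' the corollary follows, and you have supplied the missing details by showing the reduction is approximation-preserving up to the factor $4$. Your backward direction --- extracting the FAS $F(S')=\{(i,j):S'\cap P[i,j]\neq\emptyset\}$ from an arbitrary maximal circulation $S'$ --- is precisely the strengthening needed for the approximation transfer and is correct as written, since flow conservation at $ij_1$ forces $(ij_1,ij_2)\in S'$ whenever any arc of $P[i,j]$ carries flow.

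There is one small gap in your forward direction. The claim that removing $S'=\bigcup_{(i,j)\in F}C[i,j]$ leaves $A'\setminus S'$ acyclic is false when $|F|<K$: in that case $K-|F|$ copies of $(t,s)$ remain, and for any $(i,j)\notin F$ the $4$-cycle $C[i,j]=s\text{-}ij_1\text{-}ij_2\text{-}t\text{-}s$ survives in $A'\setminus S'$. The fix is immediate and matches the paper's own NP-completeness argument: in the YES case pad $F$ to size exactly $K$ (any FAS of size at most $K$ extends to one of size exactly $K$ by adding arbitrary arcs), so that all $(t,s)$ copies are consumed. This still gives $\operatorname{OPT}_{\text{minmax}}(G'_K)\le 4K$, and the rest of your thresholding argument goes through unchanged.
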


\section{Conclusion}

It is self evident that a cycle of preference votes, each requiring that one alternative ranks higher than the next, cannot conform to any partial order, or consistent ranking. Such a cycle in a vote graph represents a tie, or contradictory votes for the alternatives on the cycle. The fundamental problem of preference aggregation is determining what information in the vote graph should be disregarded in order to obtain a consensus partial order in this setting \citep[see, e.g.,][]{schwartz_cycles_2018}. 

The approach proposed here is to remove unions of cycle of votes, i.e., maximal circulations in the vote graph, and infer aggregate preferences from the non-conflicting remainder. We begin by examining methods that are based on the removal of a maximum circulation. A major contribution of our paper is the introduction of the {\em strong} maximum circulation that guarantees the {\em uniqueness} of the induced {\em strong partial order}. If there is {\em any} maximum circulation that keeps at least one vote for a certain preference, then so does the strong maximum circulation. A second major contribution is that we show how to obtain a strong maximum circulation by solving a single minimum cost flow problem.




We then compare the maximum circulation-removal approach with existing optimization-based approaches, focusing on the well-known Kemeny method. Kemeny's method does not return a unique solution and may result in contradictory rankings among the different solutions. Further, there is no polynomial time algorithm to produce an optimal solution for Kemeny's method (unless P = NP). The strong maximum circulation problem, in contrast, is solved in strongly polynomial time using combinatorial network flow algorithms. However, we also demonstrate a close relationship between the two methods, as the dual of the maximum circulation problem can be viewed as a convex relaxation of Kemeny's model. We also explore the possibility of removing minimum maximal circulations. We show that this approach inherits the drawbacks associated with Kemeny's model related to the multiplicity of inconsistent solutions and computational complexity.

We conclude by discussing extensions of the approach to preference aggregation introduced here. We begin by noting that there are alternative efficient maximal cycle removal approaches that remove fewer votes than a strong maximum circulation does that also inherit many of the desirable properties of the strong maximum circulation algorithm. For example, it is straightforward to first eliminate all cycles of length 2, and then find a strong maximum circulation on the resulting graph. Doing so will frequently eliminate fewer votes than the strong maximum circulation and will also guarantee that any alternative that beats every other alternative in pairwise comparison is the unique winner. Exploring other efficient cycle removal approaches that also induce unique partial orders---and exploring the properties of these induced aggregate preferences---is a worthwhile avenue for future investigation. Along these lines, we look forward to evaluating the relative performance of these cycle removal approaches for preference aggregation as compared with alternative optimization-based methods \citep[e.g.,][]{hochbaum_methodologies_2006, jiang_statistical_2011, negahban2012iterative}, in terms of their ability to recover a true partial order in settings with noisy preference data and as part of ensemble methods that combine the output of multiple ranking algorithms \citep[e.g.,][]{feng_ranking_2024}.

\section*{Acknowledgement}
Dorit Hochbaum's research is supported in part by AI institute NSF award 2112533.

\bibliographystyle{aea}
\bibliography{library.bib}

\appendix

\end{document}